\newtheorem{theorem}{Theorem}
\newtheorem{lemma}{Lemma}
\newtheorem{remark}{Remark}
\title{Longest Increasing Subsequence\\ under Persistent Comparison Errors}
\author{Barbara Geissmann \\ Department of Computer Science\\ ETH Zurich, Zurich, Switzerland\\ {barbara.geissmann@inf.ethz.ch}}
\DeclareMathOperator{\rank}{rank}
\DeclareMathOperator{\pos}{pos}
\DeclareMathOperator{\disl}{disl}
\newcommand{\APXSORT}{S^{apx}}
\newcommand{\SORT}{S^{sort}}
\begin{document}
	
	\maketitle

\begin{abstract}
	We study the problem of computing a \textit{longest increasing subsequence} in a sequence $S$ of $n$ distinct elements in the presence of \textit{persistent} comparison errors.
	In this model,\footnote{Braverman and Mossel, \textit{Noisy sorting without resampling}, SODA, 2008} every comparison between two elements can return the wrong result with some fixed (small) probability $ p $, and comparisons cannot be repeated.
	Computing the longest increasing subsequence exactly is impossible in this model, therefore, the objective is to identify a subsequence that (i) is indeed increasing and (ii) has a length that approximates the length of the longest increasing subsequence.

We present asymptotically tight upper and lower bounds on both the approximation factor and the running time.
In particular, we present an algorithm that computes an $O(\log n)$-approximation in time $O(n\log n)$, with high probability. This approximation relies on 
the fact that that we can approximately sort\footnote{Geissmann, Leucci, Liu, and Penna, \textit{Optimal Sorting with Persistent Comparison Errors}, ArXiv e-prints 1804.07575, 2018} $n$ elements in $O(n\log n)$ time such that the maximum dislocation of an element is at most $O(\log n)$.
For the lower bounds, we prove that (i) there is a set of sequences, such that on a sequence picked randomly from this set every algorithm must return an $\Omega(\log n)$-approxi\-mation with high probability, and (ii) any $O(\log n)$-approximation algorithm for longest increasing subsequence requires $\Omega(n \log n)$ comparisons, even in the absence of errors.
\end{abstract}

\section{Introduction}
When dealing with complex systems and large volumes of information, it is often the case that at least part of the involved data will be inconsistent.
These inconsistencies can be \emph{intrinsic}, i.e., they might shed from the fact that the data is obtained from an inherently \emph{noisy} source (this is typically the case in human-produced data),
or they might be the result of corruptions caused by random errors (think, for instance, of random memory faults or communication errors).
It is therefore important to understand how the classical techniques used to solve basic algorithmic problems can cope with such errors.

In this paper, we consider the problem of computing a \emph{longest increasing subsequence} $LIS(S)$ in a given sequence $S$ of distinct elements 
--a fundamental task that appears naturally in many areas, such as in probability theory and combinatorics \cite{aldous1999longest,baik1999distribution}, scheduling \cite{0305-4470-39-29-L01,potts1991permutation}, 
 and computational biology \cite{delcher,Zhang03}-- in  presence of \emph{random persistent comparison errors}. 
 
 In this model, every comparison between two elements is wrong with some small fixed probability $p$, and correct with probability $1-p$. The comparison results are independent over all pairs of elements, and comparisons cannot be repeated. Note that this is equivalent to say that repeating the same comparison multiple times yields each time the same result. Hence, comparison results are persistent: always wrong or always correct. Furthermore, we assume that we cannot inspect the values of the elements, but only use such element comparisons.
Because of these comparison errors, it is impossible to compute $ LIS(S) $ correctly, instead, we seek to return a sequence that (i) is indeed increasing and that (ii) has some guaranteed minimum length depending on the length of the longest increasing sequence $l:=|LIS(S)|$. In particular, we are interested in algorithms that return an increasing sequence of length at least  $\frac{1}{r}\cdot l$, where $r$ is the \emph{approximation factor}.

This error model has been first employed by Braverman and Mossel \cite{BravermanM08}, who studied the problem of sorting. Other work on sorting followed (see \cite{ISAAC,STACS,KleinPSW11}) and the model has been studied also for finding the minimum, searching, and linear programming in two dimensions \cite{KleinPSW11}.
In this paper, we will present an algorithm that returns an $O(\log n)$-approximation on the longest increasing subsequence in $O(n\log n)$ time, with high probability. 
Moreover, we will prove that
this approximation factor is the best possible as $\Omega(\log n)$ is also a lower bound,
regardless of the running time, and that any $(\log n)$-approximation algorithm requires $\Omega(n \log n)$ comparisons, even in the absence of comparison errors.

\subsection{Related Work}\label{sec:related-work}
There are several algorithms to compute a longest increasing subsequence of a sequence $S$, if no comparison errors happen.
Typically, they are based on a common underlying algorithmic idea:
They process the elements one by one and maintain for each length found so far the increasing subsequence of this length that ends with the smallest possible element seen so far. We shall call this algorithmic idea the \emph{Core-Algorithm} to compute a longest increasing subsequence. 
The running time of the Core-Algorithm is $O(n\log n)$ in the decision-tree model (see for instance \cite{BespamyatnikhS00a,ChandramouliG14,Fredman75}). This time complexity is tight, as shown in \cite{Fredman75}.
In the RAM model, where one can also inspect the values, the algorithm can be implemented to run in $O(n\log\log n)$ time \cite{CrochemoreP10,YangHC05}.
All the results can be parameterized to $O(n\log l)$ or $O(n \log\log l)$, respectively, where $l$ is the length of the longest increasing  subsequence. 

The longest increasing subsequence of $S$ is also the \emph{longest common subsequence} between $S$ and the sorted sequence of the elements in $S$. This implies an $O(n^2)$ time (or $O(n^2/\log n)$ time if optimized) algorithm to find the longest increasing subsequence when using the standard dynamic programming technique that is used to find longest common subsequences \cite{Fredman75,MASEK198018}.

The model with random persistent comparison errors has been extensively studied for finding the smallest element, for searching, and for sorting (see for instance  \cite{BravermanM08,ISAAC,STACS,KleinPSW11}). 
A common way to measure the quality of an output sequence in terms of sortedness, is to consider the \textit{dislocation} of the elements. The dislocation of an element is the absolute difference between its position in the output sequence and its position in the correctly sorted sequence (its rank). Typically, one considers the \textit{maximum dislocation} of any element in the output sequence and the \textit{total dislocation} (the sum of the dislocations of all elements).
It has been shown for instance in \cite{OPT-SORTING}, that there is an algorithm with running time $O(n\log n)$ which achieves simultaneously maximum dislocation $O(\log n)$ and total dislocation $O(n)$ with high probability, and that this is indeed the best one can hope for (i.e., there exist matching lower bounds that show that no possibly randomized algorithm can sort such that, with high probability, the maximum dislocation is $o(\log n)$ or the total dislocation is $o(n)$).
A maximum dislocation of $O(\log n)$ implies the following: on the positive side, it is possible to derive the correct relative order of two elements whose ranks differ by at least $\Omega(\log n)$; on the negative side, this is not possible for two elements whose ranks differ by less than $O(\log n)$.
The results on the maximum dislocation of sorting are of interest for the problem of finding the longest  increasing subsequence, because an increasing subsequence is also a sorted subsequence.

\subsection{Our Contribution}
We prove asymptotically tight upper and lower bounds on both the approximation factor and the running time for longest increasing subsequence under persistent comparison errors.
For the upper bounds, we define an \emph{Approximation-Algorithm} that computes an $O(\log n)$-approxi\-mation to the longest increasing subsequence of $S$.
In fact, it even finds the longest possible increasing subsequence under the implication that we cannot sort better than obtaining an order with maximum dislocation $O(\log n)$.
Formally, we prove the following result:
\begin{theorem}[Upper Bounds]\label{thm:upper}
	 For any sequence $S$ that contains $n$ distinct elements, our {Approximation-Algorithm} computes an $O(\log n)$-approximation to the longest increasing sequence of $S$, in $O(n\log n)$ time, with probability at least $1-\frac{1}{n}$.
\end{theorem}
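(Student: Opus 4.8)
The plan is to \emph{reduce} the problem to approximate sorting and then run a gap‑tolerant version of the Core‑Algorithm. First I would invoke the cited $O(n\log n)$‑time approximate‑sorting algorithm for persistent comparison errors to compute a permutation of the elements of $S$ whose maximum dislocation is at most $D=O(\log n)$, with probability at least $1-\frac1n$; for each element $s$ this yields an \emph{approximate rank} $\rho(s)$ (its position in that permutation) satisfying $|\rho(s)-\rank(s)|\le D$. From this point on the Approximation‑Algorithm performs no further comparisons, so \emph{all} of its failure probability is absorbed into this single step, and everything below is deterministic conditioned on the sorting succeeding; the probability bound in the theorem is thus inherited directly.

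Next I would run the Core‑Algorithm on $S$ in its original sequence order, but process each element through its value $\rho(\cdot)$ and impose a \emph{safety gap}: the algorithm seeks the longest subsequence $s_{i_1},\dots,s_{i_k}$ with $i_1<\dots<i_k$ and $\rho(s_{i_{j+1}})-\rho(s_{i_j})>2D$ for every $j$. Exactly as in the standard Core‑Algorithm, one keeps, for each attainable length $\ell$, the smallest approximate rank of a last element of such a gap‑feasible subsequence; this array is monotone in $\ell$ (dropping the last element of a length‑$(\ell{+}1)$ feasible subsequence gives a length‑$\ell$ one with strictly smaller ending $\rho$), so each of the $n$ elements is handled with a single binary search, for $O(n\log n)$ time overall; standard back‑pointers recover the subsequence. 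Property (i) is then immediate: if $\rho(b)-\rho(a)>2D$ then $\rank(b)\ge\rho(b)-D>\rho(a)+D\ge\rank(a)$, so the returned subsequence is genuinely increasing.

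The core of the argument is property (ii). Let $L=(s_{j_1},\dots,s_{j_l})$ be a true longest increasing subsequence, so $\rank(s_{j_1})<\dots<\rank(s_{j_l})$ are $l$ strictly increasing integers. I would \emph{sparsify} $L$ by keeping every $(4D{+}1)$‑st element; the result has length at least $l/(4D+1)=\Omega(l/\log n)$, is still increasing in sequence order, and for any two kept elements with $a$ occurring $4D{+}1$ positions before $b$ in $L$ we have $\rank(b)-\rank(a)\ge 4D+1$, hence $\rho(b)-\rho(a)\ge(\rank(b)-D)-(\rank(a)+D)\ge 2D+1>2D$. Thus the sparsified subsequence is a \emph{feasible} candidate for the gap‑tolerant Core‑Algorithm, which returns the longest feasible candidate; therefore its output has length at least $l/(4D+1)$, i.e.\ the approximation factor is $O(\log n)$. (In fact this shows the algorithm returns a longest subsequence that is provably increasing given only a $D$‑dislocation oracle, matching the informal claim in the text.)

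The main obstacle I expect is getting the two $\Theta(\log n)$ losses to compose with the right constant: the imposed gap must be at least \emph{twice} the maximum dislocation in order to \emph{certify} a true order relation, and then the LIS must be sub‑sampled at rate $\Theta(1/D)$ so that \emph{every} surviving consecutive pair clears that gap — which forces a sampling step of order $4D$, not $2D$. A secondary point to check carefully is that the cited sorting routine does output an actual permutation (so the $\rho(\cdot)$ values are distinct and well defined) and that its high‑probability guarantee can be taken to be at least $1-\frac1n$; once these are in place, the running time and the union over the single bad event follow routinely.
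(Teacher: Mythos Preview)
Your proposal is correct and mirrors the paper's proof: approximately sort to obtain positions with dislocation $d=O(\log n)$, run a gap-tolerant Core-Algorithm requiring consecutive approximate positions to differ by at least $2d$, then argue the output is truly increasing (your property~(i) is exactly Lemma~\ref{lem:relative-order}) and within an $O(d)$ factor of $|LIS(S)|$. The only divergence is in the length lower bound, where the paper pigeonholes the LIS into the $2d$ residue classes of positions in $\APXSORT$ (one class contains at least $|LIS(S)|/(2d)$ LIS elements, and within a class any two elements are automatically $\ge 2d$ apart and hence, by Lemma~\ref{lem:relative-order}, in the correct relative order) to get approximation factor $2d$ rather than your $4D{+}1$ from direct sparsification---so your anticipated ``main obstacle'' about needing a $4D$ sampling step actually dissolves, though either constant suffices for the $O(\log n)$ statement.
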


\noindent
This result on the upper bound can be generalized to other error models. In fact, if we are given or able to obtain an approximately sorted sequence with maximum dislocation $d$, then our Approximation-Algorithm will return a $2d$-approximation to the longest increasing subsequence. We discuss this point in the Conclusion (Section~\ref{sec:conclusion}).

To prove our lower bound on the approximation factor of any algorithm solving $LIS(S)$ under persistent comparison errors with high probability, we will identify  a small collection of sequences that contain a longest increasing sequence of size $\Theta(\log n)$ and that are likely to look the same in our error model. Then, we show for any algorithm that if it \emph{succeeds} on one sequence of this collection by returning a constant number of elements of this increasing  sequence it must fail on another sequence. 
In particular, we will prove the following theorem:
\begin{theorem}[Lower Bound -- Approximation Factor]\label{thm:lower}
	There exists a collection of sequences $\mathcal{S}$ (permutations of length $n$) and a probability distribution on $\mathcal{S}$, such that no algorithm can return an $O(\log n)$-approximation (for s suitable hidden constant that depends on $p$) of the longest increasing subsequence with probability $1-\frac{1}{n}$.
\end{theorem}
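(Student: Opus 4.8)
The plan is to build an adversarial family $\mathcal{S}$ of permutations that are statistically indistinguishable under the persistent error model, yet on which no fixed increasing subsequence of moderate size can be simultaneously ``correct'' for many of them. Concretely, I would fix a block size $k = \Theta(\log n)$ (with the hidden constant chosen depending on $p$) and partition $\{1,\dots,n\}$ into $m = n/k$ consecutive blocks $B_1,\dots,B_m$, each of $k$ positions. Within each block, place a short increasing run of length roughly $k$, but choose the \emph{internal arrangement} of each block from a set of candidate permutations that are pairwise close enough in the sense of maximum dislocation $O(\log n)$ that the error model cannot reliably distinguish them: by the results quoted in Section~\ref{sec:related-work}, two elements whose ranks differ by $o(\log n)$ are compared incorrectly with constant probability, so within a block the comparison outcomes look essentially the same across candidates. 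Globally the blocks are arranged in \emph{decreasing} order of value ranges, so that the longest increasing subsequence is forced to live inside a single block and has length $\Theta(k) = \Theta(\log n)$. Then an $O(\log n)$-approximation must output a constant number of elements that form an increasing subsequence within some block.

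The second step is the information-theoretic heart. I would put the uniform distribution over the candidate internal arrangements (independently per block, or over a carefully chosen subcode to control dependencies), and argue that the transcript of comparisons an algorithm sees has distribution that is within total variation $o(1)$ — or at least bounded away from $1$ — of what it would see under a different choice of internal arrangements. The key quantitative lemma is that for two ``close'' permutations $\pi, \pi'$ of a block, the induced distributions on the $\binom{k}{2}$ noisy comparison results inside that block satisfy $d_{TV} \le$ (something like) $1 - \Omega(1)$, because flipping from $\pi$ to $\pi'$ only changes the \emph{bias} of comparisons between near-rank elements, and those are already near-fair coins when $p$ is constant and the rank gap is $O(\log n)$ with a small enough constant. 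Combining across blocks (using independence and a hybrid/coupling argument), any deterministic algorithm's output — a fixed small set of positions claimed to be increasing — is ``right'' (i.e.\ actually increasing in the true hidden permutation) with probability bounded away from $1$; then Yao's principle upgrades this to randomized algorithms, and by taking the union over the (polynomially many) positions the algorithm could plausibly output, I force the failure probability above $\frac1n$.

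The third step is bookkeeping: verify that the constructed family really does have $|LIS| = \Theta(\log n)$ for \emph{every} member (so the approximation target is well-defined and uniform across $\mathcal{S}$), that ``returning a constant number of elements of the increasing run'' is genuinely required of an $O(\log n)$-approximation (since $|LIS| = \Theta(\log n)$, an $O(\log n)$-approximation returns $\Omega(1)$ elements), and that the elements it returns must come from one block's increasing run — because cross-block pairs are in decreasing value order and hence cannot both be in an increasing subsequence, while intra-block non-run elements break monotonicity. Then any algorithm that ``succeeds'' on one member of $\mathcal{S}$ by naming those positions is, by the indistinguishability lemma, naming positions that fail to be increasing on a constant fraction of the other members, contradicting a success probability of $1-\frac1n$.

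The main obstacle I expect is the indistinguishability lemma in the right form: I need the candidate internal block-arrangements to be \emph{simultaneously} (a) numerous and spread out enough that no single output is valid for a $1-\frac1n$ fraction of them, (b) close enough in max-dislocation that the noisy comparison transcript has total variation distance bounded strictly below $1$, and (c) structured so that ``being an increasing subsequence of length $c$'' picks out genuinely different position-sets across the family. Balancing (a) and (b) is the delicate tension — too much spread and the transcripts become distinguishable; too little and the algorithm can name one universally-valid increasing run. I would resolve this by choosing, inside each block, a family of shifted/rotated increasing runs on a sub-grid of spacing $\Theta(\log n)$ (so consecutive runs differ by a small rank-shift that the error model cannot detect), which gives $\Theta(\log n)$ distinct candidate runs per block whose position-sets are pairwise disjoint, and then a standard counting argument (a fixed $c$-subset lies in at most a few of these disjoint runs) delivers the $\Omega(1)$ failure probability per block, amplified across the $m$ independent blocks to beat $\frac1n$.
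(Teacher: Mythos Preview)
Your high-level instinct---build a family of nearby permutations that the noisy comparisons cannot reliably distinguish, then argue that any fixed output must fail on some member---matches the paper's strategy. But your concrete construction is more complicated than necessary and internally inconsistent, and you have not found the clean combinatorial trick that makes the argument short.

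The paper uses a \emph{single} increasing block, not many. With $\eta := \lceil \frac{\log n}{2\log((1-p)/p)} \rceil$, it places the $\eta$ largest elements first in increasing order, followed by the remaining $n-\eta$ elements in decreasing order; call this $S^*$. The family $\mathcal{S}$ has only $\eta$ members: $S_{(i)}$ is obtained by relocating the single maximum element $n$ to position $i$, for $1 \le i < \eta$. Moving $n$ from position $\eta$ to position $i$ flips fewer than $\eta$ pairwise orders, so for every comparison transcript $C$ one has the pointwise likelihood bound $\Pr(R(S_{(i)})=C) > \Pr(R(S^*)=C)\cdot(p/(1-p))^\eta \ge \Pr(R(S^*)=C)/\sqrt{n}$; this replaces your vaguer total-variation estimate. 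The punchline is then one line: if the algorithm succeeds on $S^*$ under transcript $C$, its output is an increasing subsequence of length at least two, hence lies entirely in positions $\{1,\dots,\eta\}$ (the tail is decreasing and consists of smaller values); let $i_C$ be its first position. In $S_{(i_C)}$ that position holds the element $n$, so the \emph{same} output is not increasing there. Summing the likelihood bound over the explicit map $C \mapsto S_{(i_C)}$ yields failure probability exceeding $1/n$.

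Your multi-block plan has two concrete gaps. First, the internal block design is inconsistent: a block of size $k = \Theta(\log n)$ cannot host $\Theta(\log n)$ pairwise position-disjoint increasing runs each of length comparable to $k$, nor does a ``sub-grid of spacing $\Theta(\log n)$'' inside such a block contain more than $O(1)$ points---so the family you need in step (a) does not exist as described. Second, the amplification across blocks is a red herring: the algorithm outputs positions from one block only, so independence across blocks does not compound the per-block failure probability (you would get $\Omega(1)$ failure, which suffices, but not via the mechanism you describe). The paper's single-element-moved construction sidesteps both issues and turns the ``succeeds here $\Rightarrow$ fails there'' step into an explicit map rather than a counting argument over hypothetical run families.
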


We prove a lower bound of $\Omega(n\log n)$ on the number of comparisons (which is a lower bound on the running time)  needed to compute an $O(\log n)$-approximation by considering the easier case in which all comparisons are correct, and by adapting the techniques used in \cite{Fredman75} for proving a similar lower bound for exact (i.e., 1-approximate) algorithms:
\begin{theorem}[Lower Bound -- Running Time]\label{thm:lower-time}
	Any $(\log n)$-approximation algorithm for longest increasing subsequence requires $\Omega(n \log n)$ comparisons, even if no errors occur.
\end{theorem}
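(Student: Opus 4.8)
The plan is to adapt the information-theoretic (decision-tree) argument of Fredman~\cite{Fredman75} for exact longest increasing subsequence to the approximate setting. Since the claim is about the error-free model, comparisons are truthful, and a $(\log n)$-approximation algorithm is simply a comparison tree that, on every input permutation $S$ of length $n$, outputs an increasing subsequence of length at least $|LIS(S)|/\log n$. We want to show any such tree has depth $\Omega(n\log n)$.

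First I would restrict attention to a family $\mathcal{F}$ of "hard" permutations on which the longest increasing subsequence is forced to be large — say $l = \Theta(n)$ — so that a $(\log n)$-approximate answer is still an increasing subsequence of length $\Omega(n/\log n)$, which is itself long enough that identifying it reveals a lot of order information. Concretely I would take permutations that are a concatenation of $n/b$ blocks of size $b = \Theta(\log n)$, where within each block the relative order is arbitrary but the blocks are arranged so that picking one element per block (in block order) is always increasing; then $LIS$ has length $\ge n/b$, and any increasing subsequence of length $\ge n/(b\log n)$ must still hit a constant fraction of the blocks and, within the blocks it hits, pin down which element was selected. The second step is the counting/adversary argument: I would argue that the output increasing subsequence, together with the comparisons performed, determines the permutation restricted to a subset $T$ of positions with $|T| = \Omega(n)$ in a way that requires $\log_2(|T|!/\text{(symmetries)}) = \Omega(n\log n)$ bits. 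More carefully, following Fredman, one shows that two distinct permutations in $\mathcal{F}$ that disagree on the internal order of some block that is "represented" in the output cannot reach the same leaf of the comparison tree while both receiving a correct (valid increasing) answer — because the reported subsequence would have to be simultaneously increasing under two incompatible orders. Hence the number of leaves is at least the number of such permutations, which is $2^{\Omega(n\log n)}$, giving depth $\Omega(n\log n)$.

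The technical heart, and the step I expect to be the main obstacle, is handling the fact that the approximation algorithm has a genuine choice: it need not output a specific canonical subsequence, only \emph{some} sufficiently long increasing one, and it may "skip" up to a $(1-1/\log n)$ fraction of the natural witnessing subsequence. So a single leaf of the tree can be consistent with many permutations as long as the \emph{particular} subsequence it outputs there remains increasing in all of them. The fix is to choose $\mathcal{F}$ so that this freedom cannot be exploited: I would make the block structure "rigid" — e.g., ensure that for the output to be increasing and of length $\ge n/(b\log n)$, the algorithm is forced to select, from each block it uses, the unique element of a prescribed rank (so there is no wiggle room inside a used block), while still leaving $\Theta(n\log n)$ bits of entropy in the \emph{choice of which rank profile} the permutation realizes across blocks. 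Then a fixed output at a fixed leaf is compatible with at most a $2^{-\Omega(n\log n)}$ fraction of $\mathcal{F}$, and a union/counting bound over leaves completes the proof. I would also verify the easy reductions: an $\Omega(n\log n)$ lower bound for constant $p>0$ follows a fortiori from the error-free bound since error-free is a special case, and the parameterization $l=\Theta(n)$ makes "$(\log n)$-approximation" and "$(\log l)$-approximation" coincide up to constants, so no separate argument is needed there.
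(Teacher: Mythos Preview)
Your approach is genuinely different from the paper's, and as written it has a fatal gap. In your block construction the blocks are contiguous in position, and for ``picking one element per block (in block order) is always increasing'' to hold irrespective of the internal orders, the blocks must also be contiguous in value (block $i$ holds exactly the values $\{(i-1)b+1,\dots,ib\}$). But then the algorithm knows the block boundaries from the positions alone and can simply output positions $1,\,b+1,\,2b+1,\dots$ with \emph{zero} comparisons; this is an increasing subsequence of length $n/b$, which is in fact the exact LIS when the internal orders are decreasing. Your central claim---that two permutations disagreeing inside a block represented in the output cannot share a leaf---is therefore false: if the output takes a single position $p$ from block $i$ and the neighbouring output positions lie in other blocks, the value at $p$ is automatically sandwiched correctly regardless of block $i$'s internal order, so the same output is valid for \emph{every} internal order of block $i$. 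The entropy you place inside the blocks is invisible to an algorithm that never compares within them, and nothing in your ``rigidity'' fix prevents the algorithm from ignoring the interiors; any family in which some fixed set of positions is guaranteed increasing across all of $\mathcal{F}$ is defeated in the same way.

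The paper goes in the opposite direction, working with permutations whose LIS is \emph{small} and using Dilworth's theorem rather than a direct counting of outputs. It observes that a $(\log n)$-approximation algorithm $B$ solves the relaxed decision problem ``is $|LIS(S)|<k\log n$?'' under the promise that the answer is ``yes'' when $|LIS(S)|<k$ and ``no'' when $|LIS(S)|\ge k\log n$ (answer ``yes'' iff $B$'s output has length $<k$). At any ``yes'' leaf, every linear extension consistent with the comparisons has $LIS<k\log n$, so the induced partial order has no antichain of size $k\log n$; by Dilworth it decomposes into fewer than $k\log n$ chains, and hence the input can be fully sorted from that leaf with only $n\log(k\log n)+O(n)$ further comparisons. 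Since at least $S(n,k)\ge n!\bigl(1-\binom{n}{k+1}/k!\bigr)$ permutations have $LIS<k$ and must land in ``yes'' leaves, the extended tree has at least $S(n,k)$ leaves, and $B$ itself must use at least $\log S(n,k)-n\log(k\log n)-O(n)$ comparisons; taking $k=n^{2/3}$ makes this $\Omega(n\log n)$. The idea you are missing is precisely this Dilworth-based reduction to sorting, which sidesteps altogether the problem of controlling which subsequence the approximation algorithm chooses to output.
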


\section{Preliminaries}
Since we assume that all elements in the input sequence $S=\langle s_1,s_2,\dots,s_n \rangle$ are distinct, we can also assume, for easier analysis and readability, that $S$ is a permutation of the numbers (elements) $\{1,\dots,n\}$. 
By our error model, the elements in $S$ posses a true \mbox{linear order} , i.e., $\SORT := \langle 1,\dots,n\rangle$, however, this order can only be observed through erroneous  comparisons.

For two distinct elements $x$ and $y$, we will write $x<y$ to denote that $x$ is smaller than $y$ according to the true linear order (resp. $x>y$ to denote that $x$ is larger than $y$ according to the true linear order), and we will write $x\prec y$ (resp. $x\succ y$) to mean that $x$ is observed to be smaller (resp. larger) than $y$ in the comparison result.
For a given sequence $S$ and an element $x\in S$, we define $\rank(x,S) = 1+|\{y\in S\colon y<x\}|$ to be the \emph{true rank} of element $x$ in $S$ (note that ranks start from 1), and we define $\pos(x,S)\in [1,|S|]$ to be the \emph{position} of $x$ in $S$ (positions also start \mbox{from 1}). The dislocation of $x$ in $S$ is then $\disl(x,S) = |pos(x,S)-\rank(x,S)|$, and the \emph{maximum dislocation} of $S$  is $\disl(S) = \max_{x \in S} \disl(x,S)$.
%
For a given sequence $S$, we let $C\in \{\prec,\succ \}^{\binom{n}{2}}$ denote the comparison outcomes that we can observe. For $C=\langle c_1,\dots, c_{\binom{n}{2}} \rangle $, this means that if $c_k=c_{(i-1)n+j}=~``\prec"$ with $1\le i<n$ and  $i<j\le n$, then $s_i \prec s_j $ (resp. $s_i \succ s_j $ if $c_k=c_{(i-1)n+j}=``\succ"$).
Finally, for $z \in \mathbb{R}$, we write $\log z$ 
for the binary logarithm of~$z$.

We continue the preliminaries with some results on sorting that we will use 
to prove our upper bound on the approximation factor.
\begin{theorem}[Theorem 3 in \cite{OPT-SORTING}]\label{thm:sort}
	There is an algorithm that approximately sorts, in $O(n\log n)$ worst-case time, $n$ elements subject to random persistent comparison errors so that the maximum  dislocation of the resulting sequence is $O(\log n)$, with probability $1-\frac{1}{n}$.
\end{theorem}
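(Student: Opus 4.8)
\emph{Proof idea.}
The plan is to build an algorithm in which every decision is backed by $\Theta(\log n)$ aggregated comparisons, so that a union bound over the $O(n)$ decisions gives failure probability $1/n$. The quantitative engine, which also explains why $O(\log n)$ is the right target, is the following dichotomy: if an element $x$ is compared against a block of $\Theta(\log n)$ elements whose true ranks all lie above $\rank(x)$ (resp. all below), then a majority vote of the outcomes is correct except with probability $n^{-\Omega(1)}$ (Chernoff, with the hidden constant depending on $p$); on the other hand, two elements whose ranks differ by $o(\log n)$ are information-theoretically indistinguishable, so $O(\log n)$ dislocation is the best one can hope for. The algorithm is a recursive, merge-sort-like procedure that maintains maximum dislocation $O(\log n)$ on the subsequence produced at every level of the recursion.

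The base case sorts blocks of $\Theta(\log n)$ elements by performing all pairwise comparisons and ordering each block by observed rank: since the expectation of the observed within-block rank is a strictly increasing affine function of the true within-block rank $\rho$ (it equals $(1-2p)\rho + p(w+1)$ in a block of $w$ elements), ordering by observed rank agrees with the true order up to a $\pm O(\log n)$ perturbation coming from the deviations, which a Hoeffding bound controls. The delicate part is the combine step. A naive noisy merge does \emph{not} work: merging two internally $D$-dislocated halves as though they were sorted can multiply the dislocation by a constant factor at each of the $\log n$ levels, blowing it up to $\Theta(n)$; hence the combine must \emph{actively repair} dislocation rather than merely preserve it. I would merge two approximately sorted halves using, for each output slot, a $\Theta(\log n)$-wide majority vote between the two current fronts; this forces the position of a merged element to equal its position within its own half plus the number of smaller elements of the other half, counted correctly up to $\pm O(\log n)$, hence total dislocation $O(\log n)$ irrespective of the inputs --- provided one checks that a $\Theta(\log n)$-window vote is still reliable even though the inputs are themselves $\Theta(\log n)$-dislocated, and that boundary effects between windows are absorbed (two interleaved shifted tilings do this).

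The main obstacle I expect is the running time: the target is $O(n\log n)$, but the scheme above, implemented naively, spends $\Theta(\log n)$ comparisons per element per level, i.e. $\Theta(n\log^2 n)$ in total. Removing the spurious logarithmic factor requires \emph{sharing} comparisons --- reusing the votes across adjacent output slots (a sliding-window amortization, since consecutive slots consult almost the same blocks) and across recursion levels --- so that a subproblem of size $m$ is combined in $O(m)$ comparisons rather than $O(m\log n)$. A convenient auxiliary tool, and a cheap way to polish any residual dislocation, is one pass of \emph{window reordering}: on a sequence with $\disl(S)=D\ge c\log n$, re-sort overlapping windows of width $\Theta(D)$ by observed rank, which replaces $D$ by $O(\sqrt{D\log n})$ at cost $O(nD)$ and, once $D=O(\log n)$, costs $O(n\log n)$ and leaves $\disl(S)=O(\log n)$. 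Composing the recursion (with the amortized combine) and, if needed, a final window-reordering pass, and taking the union bound over all the $\Theta(\log n)$-comparison votes, yields an $O(n\log n)$-time algorithm with $\disl(S)=O(\log n)$ with probability $1-\tfrac1n$.
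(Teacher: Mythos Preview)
This theorem is not proved in the present paper: it is quoted verbatim as Theorem~3 of \cite{OPT-SORTING} and used as a black box. There is therefore no ``paper's own proof'' to compare your proposal against; the authors simply invoke the existence of such an algorithm and move on.

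That said, a brief comment on your sketch. The high-level architecture you describe---merge-sort with each decision backed by a $\Theta(\log n)$-window majority vote, plus a window-reordering cleanup pass---is indeed in the spirit of the algorithms developed in \cite{ISAAC,STACS,OPT-SORTING}. The genuine difficulty, which you correctly identify but do not resolve, is the running time: your scheme as written is $\Theta(n\log^2 n)$, and your proposed fix (``share comparisons across adjacent output slots and across recursion levels'') is where the actual work lies. In the persistent-error model every pair can be compared only once, so a sliding-window amortization must be argued with care: reusing a comparison in two overlapping votes creates statistical dependence between those votes, and the Chernoff/union-bound calculus you rely on assumes independence. The cited paper handles this with a specific merge procedure (and a careful accounting of which comparisons are consumed at which recursion level) that is not reconstructible from the outline you give. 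Your window-reordering recurrence $D\mapsto O(\sqrt{D\log n})$ is a nice idea for a polishing step, but note that it presupposes you already have $D=O(\log n)$ before it becomes cheap, so it cannot by itself rescue the time bound.

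In short: there is nothing to grade here against the paper, since the paper offers no proof; your sketch is plausible at the level of intuition but the step from $O(n\log^2 n)$ to $O(n\log n)$ is the crux of \cite{OPT-SORTING} and is not established by what you wrote.
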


\begin{lemma}\label{lem:relative-order}
Let $ \APXSORT=\langle apx_1,apx_2,\dots,apx_n \rangle $.	If $\disl(\APXSORT)\le d$, then for $1\le i<n-2d $, $apx_i$ and $apx_{i+2d}$ are in correct relative order: 
$	\pos(apx_i,\SORT) < \pos(apx_{i+2d},\SORT)\, .$
\end{lemma}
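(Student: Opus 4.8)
The plan is to argue by contradiction about the true positions (i.e., ranks) of the two elements $apx_i$ and $apx_{i+2d}$, using only the two facts available: their positions in $\APXSORT$ differ by exactly $2d$, and every element has dislocation at most $d$. First I would observe that since $\APXSORT$ is a permutation of $\{1,\dots,n\}$, its position function and its rank function are both bijections onto $[1,n]$, so $\pos(apx_i,\SORT)$ is just the true value (rank) of the element sitting at position $i$ of $\APXSORT$, and similarly for position $i+2d$.

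Next I would write down the dislocation bounds explicitly. Because $\disl(\APXSORT)\le d$, the element $apx_i$ has $|\pos(apx_i,\APXSORT) - \rank(apx_i,\SORT)| = |i - \rank(apx_i,\SORT)| \le d$, hence
\[
\rank(apx_i,\SORT) \le i + d .
\]
Likewise, $|\,(i+2d) - \rank(apx_{i+2d},\SORT)\,| \le d$ gives
\[
\rank(apx_{i+2d},\SORT) \ge (i+2d) - d = i + d .
\]
Since the two elements are distinct, their ranks are distinct, so $\rank(apx_i,\SORT) \le i+d \le \rank(apx_{i+2d},\SORT)$ forces the strict inequality $\rank(apx_i,\SORT) < \rank(apx_{i+2d},\SORT)$, which is exactly the claimed $\pos(apx_i,\SORT) < \pos(apx_{i+2d},\SORT)$ once we recall that a position in $\SORT$ coincides with the true rank.

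There is really no serious obstacle here; the only thing to be careful about is bookkeeping between the three objects $\pos(\cdot,\APXSORT)$, $\rank(\cdot,\SORT)$, and $\pos(\cdot,\SORT)$, and to note that distinctness of the elements is what upgrades the non-strict chain $\le$ to the strict inequality $<$ when the two intermediate bounds meet at the value $i+d$. The hypothesis $i < n - 2d$ simply guarantees that the index $i+2d$ is a legal position in $\APXSORT$, so that both elements referred to actually exist.
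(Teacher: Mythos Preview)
Your argument is correct and is essentially identical to the paper's own proof: both bound $\pos(apx_i,\SORT)\in[i-d,i+d]$ and $\pos(apx_{i+2d},\SORT)\in[i+d,i+3d]$ from the dislocation hypothesis, then use distinctness of the two elements to turn the overlap at the single value $i+d$ into a strict inequality.
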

\begin{proof}
	Since the maximum dislocation in $ \APXSORT $ is at most $d$,  $\pos(apx_i,\SORT) \in \{i-d,\dots,i+d\} $ and $\pos(apx_{i+2d},\SORT) \in \{i+d,\dots,i+3d\} $. These intervals intersect in at most one position, and the claim follows since no two elements can appear in the same position.
\end{proof}

\section{Upper Bound and Approximation-Algorithm}
We will modify the so-called Core-Algorithm (as named in Section~\ref{sec:related-work}, Related work) that computes a longest increasing subsequence in the absence of comparison errors, such that it computes an $O(\log n)$-approximation with high probability in our error model.
Before we do so, we first
show that it is possible to identify a $2d$-approximation by looking at $S$ and a sequence $\APXSORT$ with maximum dislocation $d$. 
Since we can sort such that the maximum dislocation is $O(\log n)$ (see Theorem~\ref{thm:sort}), this implies an $O(\log n)$-approximation on $ LIS(S) $.

\subsection{Upper Bound}\label{sec:upperbound}
The proof of the upper bound is based on the following fact and observation:
\begin{itemize}
	\item 	Without any comparison errors, the problem of finding $ LIS(S) $ is equivalent to the problem of finding a \textit{longest common subsequence} between $S$ and $\SORT$, where $\SORT$ is the correctly sorted order of the elements in $S$.
	\item This leads to the following observation. Let $\APXSORT$ be the sequence obtained from approximately sorting $S$ with comparison errors and consider now $\APXSORT$ as the total order over all elements, i.e., for each pair of elements, their comparison result is \textit{redefined} as their relative order in $\APXSORT$.
	Furthermore, let $A$ be any algorithm that solves $LIS(S)$ in the absence of errors. 
	If $A$ uses the redefined comparison results, it computes the \textit{longest common subsequence} $LCS(S,\APXSORT)$ between $S$ and $\APXSORT$. 
\end{itemize}

\noindent
The immediate idea of computing $LCS(S,\APXSORT)$ comprises some difficulties, since this subsequence is not necessarily increasing and, on top of that, $|LCS(S,\APXSORT)|$ might be smaller than $|LIS(S)|$. However, we can still get a first approximation.
Assume that $\APXSORT$ has maximum dislocation at most $d$. Lemma~\ref{lem:relative-order} implies that we obtain an increasing subsequence when taking every $2d$-th element of $LCS(S,\APXSORT) $. And the maximum dislocation implies that the elements in the subset containing every $2d$-th element of $LIS(S)$ appear in the same relative order in  $\APXSORT$, thus $|LCS(S,\APXSORT)| \ge \frac{1}{2d}|LIS(S)|\, .$ When put together, we get a $4d^2$-approximation. 

This approximation factor can be improved, and it turns out that considering common subsequences whose elements lie (at least) $2d$ positions apart in $\APXSORT$ is actually a good start: By Lemma~\ref{lem:relative-order}, a common subsequence 
between $S$ and $\APXSORT$ is  \textit{increasing} if for every pair of adjacent elements 
in this subsequence
their positions in $\APXSORT$ differ by at least $2d$. 
Therefore, we say that a sequence $S'=(s_1',s_2',\dots,s_m')$ is \emph{$2d$-distant} in $\APXSORT$ if\begin{equation}\label{eq:increasing-condition}
\pos(s_i',\APXSORT)+2d\le\pos(s_{i+1}',\APXSORT)\, ~~~~\text{ for } 1\le i < m\, .
\end{equation}
Notice that any (increasing) subsequence of $S$ that is $2d$-distant in $\APXSORT$ is automatically also a common (increasing) subsequence of $S$ and $\APXSORT$.
This observation suggests the following easy recipe to obtain a $2d$-approximation on longest increasing subsequence:
\begin{itemize}
\item First, partition the elements into $2d$ subsets, such that every $2d$-th element in $\APXSORT$ gets into the same subset, and obtain $2d$ input subsequences based on this partition.
\item Then, on every input subsequence, run any algorithm that computes a longest increasing subsequence if no comparison errors happen, and return the longest result.
\end{itemize}
By pigeon hole principle and since every input subsequence is now $2d$-distant in $\APXSORT$, the longest result must be a $2d$-approximation on $|LIS(S)|$. This recipe however is not optimal in the sense that in many cases, we could do better and find a longer subsequence in $S$ that is still $2d$-distant in $\APXSORT$. In fact, we lose up to a factor $2d$ in the case where $LIS(S)$ is already $2d$-distant in $\APXSORT$, but these elements are equally distributed among all input subsequences.
For this reason, we will define an approximation algorithm that finds the longest increasing subsequence in $S$ that is $2d$-distant in $\APXSORT$. We conclude this section with the obvious lemma.

\begin{lemma}\label{lem:apx-factor}
	The longest subsequence $S^*$ of $S$  that is $2d$-distant in  $\APXSORT$ has length at least\[|S^*|\ge\frac{1}{2d}|LIS(S)|\, .\]
\end{lemma}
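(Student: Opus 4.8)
The plan is to exhibit a specific subsequence $S^*$ of $S$ that is $2d$-distant in $\APXSORT$ and has length at least $\tfrac{1}{2d}|LIS(S)|$; since $S^*$ is a witness, the longest such subsequence is at least this long. The natural candidate is obtained by sub-sampling the longest increasing subsequence of $S$ itself.

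\medskip

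\noindent\textbf{Proof.}
Let $L = \langle \ell_1,\ell_2,\dots,\ell_l\rangle$ be a longest increasing subsequence of $S$, so $l = |LIS(S)|$. Consider the elements of $L$ listed in increasing order of their position in $\APXSORT$; since $L$ is increasing in the true order, and true order is a refinement of what happens here only up to dislocation, it is cleaner to simply sort the elements of $L$ by their rank (equivalently by their true value), which coincides with the order in which they appear in $L$ because $L$ is increasing. Write $\pi(j) := \pos(\ell_j,\APXSORT)$. Because $\disl(\APXSORT)\le d$, each $\ell_j$ satisfies $|\pi(j) - \rank(\ell_j,S)| \le d$, and since $\rank(\ell_1,S) < \rank(\ell_2,S) < \cdots < \rank(\ell_l,S)$, picking every $2d$-th element of $L$ — that is, the subsequence $S^* := \langle \ell_1, \ell_{1+2d}, \ell_{1+4d}, \dots\rangle$ — guarantees that consecutive chosen elements have ranks differing by at least $2d$, hence positions in $\APXSORT$ differing by at least $2d - 2d = 0$\dots{} which is not quite enough, so instead I take every $(2d)$-th element starting from a small offset and use that ranks jump by at least $2d$ while positions in $\APXSORT$ move by at least (rank gap) $-\,2d$. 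To get a genuine gap of $2d$ in $\APXSORT$, I choose stride $4d$: consecutive elements of $S^*$ then have ranks differing by at least $4d$, so their $\APXSORT$-positions differ by at least $4d - 2d = 2d$, which is exactly condition~\eqref{eq:increasing-condition}. Such an $S^*$ is a subsequence of $S$ (it is a subsequence of $L$, which is a subsequence of $S$) and is $2d$-distant in $\APXSORT$, and it has length $\lceil l/(4d)\rceil \ge l/(4d)$.

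\medskip

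\noindent This gives a $4d$-approximation rather than the claimed $2d$. To recover the factor $2d$ one sharpens the sampling argument: order the elements of $L$ by $\APXSORT$-position directly (which is a permutation of their order in $L$ that moves each element by at most $d$ places relative to its rank-order index), and observe that among any $2d$ consecutive elements of $L$ in rank order, their $\APXSORT$-positions all lie in an interval of length at most $2d + (2d-1)$; a more careful greedy selection — repeatedly take the next element of $L$ (in rank order) whose $\APXSORT$-position exceeds the previously selected one by at least $2d$ — is guaranteed to make progress often enough that it selects at least $l/(2d)$ elements, because each selection can "skip" at most $2d-1$ candidates before the $\APXSORT$-position has necessarily advanced by $2d$ (here one uses $\disl(\APXSORT)\le d$ to bound how far $\APXSORT$-position can lag behind rank-order index). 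The resulting subsequence is increasing (subsequence of $L$), is $2d$-distant in $\APXSORT$ by construction, and has length $\ge \tfrac{1}{2d}l$, establishing $|S^*|\ge \tfrac{1}{2d}|LIS(S)|$. $\qed$

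\medskip

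\noindent The main obstacle is getting the constant exactly right: a crude "take every $2d$-th element" argument only yields a $4d$-approximation, so the delicate point is the greedy/interval argument showing that a stride of essentially $2d$ in rank order already forces a gap of $2d$ in $\APXSORT$-position, which is where the dislocation bound $d$ (and Lemma~\ref{lem:relative-order}) must be invoked tightly rather than loosely.
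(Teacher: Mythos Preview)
Your proof has a genuine gap in the final step. The claim that the greedy selection ``can skip at most $2d-1$ candidates before the $\APXSORT$-position has necessarily advanced by $2d$'' is false. Take $d=2$, $L=(\ell_1,\dots,\ell_{12})$ with ranks $r_i=i$, and an $\APXSORT$ in which $(p_1,\dots,p_7)=(3,1,2,4,5,6,7)$ (each $p_i\in[i-2,i+2]$, all distinct, maximum dislocation $2=d$). Greedy selects $\ell_1$ (position $3$), then must skip $\ell_2,\dots,\ell_6$ (positions $1,2,4,5,6$, all $<7$) before selecting $\ell_7$. That is five skipped candidates, not $2d-1=3$. In general, a single greedy step can skip up to roughly $4d$ indices, so the per-step bound you assert does not hold; any argument along these lines would need an amortised analysis you have not supplied. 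Your $4d$ bound (every $4d$-th element) is correct but does not prove the lemma as stated.

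The paper's argument is different and avoids this difficulty entirely. Partition the elements of $LIS(S)$ into $2d$ classes according to the residue of their $\APXSORT$-position modulo $2d$. By pigeonhole one class contains at least $|LIS(S)|/(2d)$ of them. The key observation is that any two elements $\ell_i,\ell_j$ of $LIS(S)$ with $i<j$ that land in the same class automatically satisfy $p_j-p_i\ge 2d$: indeed $p_j-p_i$ is a nonzero multiple of $2d$, and from $r_j>r_i$ together with $|p_k-r_k|\le d$ one gets $p_j-p_i\ge (r_j-r_i)-2d\ge 1-2d>-2d$, which forces $p_j-p_i\ge 2d$. Thus the elements of $LIS(S)$ in that class, taken in $S$-order, already form a $2d$-distant subsequence of the required length. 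The residue trick is exactly what turns the loose bound $p_j-p_i>-2d$ into the sharp $p_j-p_i\ge 2d$ without any stride-$4d$ loss.
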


\subsection{Approximation-Algorithm}\label{sec:algorithm}
Consider the \emph{Core-Algorithm} described in Algorithm~\ref{alg:core} that computes the longest increasing subsequence of the input sequence $S$ in the error-free case.
The algorithm processes the input elements one by one, maintaining the longest increasing subsequence found so far. 
In particular, it maintains a parameter $k$ and an array $L$, such that $k$ is the length of the longest increasing subsequence found so far and $L$ contains an entry for each length 1 to $k$, such that $L[i]$ stores the smallest element processed so far that can be at the end of an increasing subsequence of \mbox{length $i$.} 
\begin{itemize}
	\item The first element is placed to $L[1]$ and $k$ is set to 1.
	\item Each subsequent element $x$ is placed to $L[j+1]$, such that $j$ is the largest position where $y=L[j]$ is smaller than $x$.
	\item If $x$ is placed to $L[k+1]$, then $k$ is updated to $k+1$.
	\item Whenever a new element $x$ is placed, put a pointer $prec$ from $x$ to the element in $y=L[j]$, that, by construction, has a lower value than $x$.
	\item In the end, follow these pointers from the top element of the last pile to recover the longest increasing subsequence (in reverse order). 
\end{itemize}
An entry $L[j]$ basically represents the increasing sequence of length $j$ that ends with the smallest possible element processed so far. When an element $x$ is inserted into some position $L[j+1]$ this means that it is appended to the sequence represented by $L[j]$. Hence, $x$ either increases the longest increasing sequence so far (case $j=k$) or the sequence $L[j+1]$ gets replaced by this new sequence (case $x<L[j+1]$).
\begin{algorithm}[t]\caption{$Core$-$Algorithm(S=\langle s_1,\dots,s_n \rangle)$}\label{alg:core}\small
	$L[1] \longleftarrow s_1$;
	$k\longleftarrow 1$\;
	\ForEach{$ i=2,\dots,n $}{$ x\longleftarrow s_i $\;
		\lIf{$x<L[1]$}{$L[1]\longleftarrow x$}
		\Else{
			$j\longleftarrow \max\{j\le k \colon L[j]< x \}$\;
			\lIf{$j=k$}{$k\longleftarrow k+1$}
			$L[j+1]\longleftarrow x$;
			$prec[x]\longleftarrow L[j]$\;}}
	$lis[1] \longleftarrow L[k]$\;
	\ForEach{$i=2,\dots,k$}{
		$lis[i]\longleftarrow prec[lis[i-1]]$}
	\Return $lis$\;
\end{algorithm}

Our \emph{Approximation-Algorithm}, as described in Algorithm~\ref{alg:apx}, is obtained by modifying the Core-Algorithm such that it works in our error model.  
\begin{itemize}
	
	\item We first approximately sort (using the algorithm from \cite{OPT-SORTING}, see also Theorem~\ref{thm:sort} in the current paper) the elements of $S$ to obtain $\APXSORT$, and we redefine the comparison outcomes based on this total order, i.e., the result of a comparison between two elements now corresponds to their relative order in $\APXSORT$.

\item  To compute a suitable subsequence, we change the algorithm so that it remembers the longest $2d$-distant in $\APXSORT$ subsequences instead of the longest increasing subsequences. This implies that an element $x$ is only appended to an (intermediate) subsequence that ends with element $y$, if $\pos(y,\APXSORT)+2d < \pos(x,\APXSORT)$.
\end{itemize}

For easier analysis, we introduce some additional notation. We
call one execution of the lines \ref{itstart} to \ref{itend} of Algorithm~\ref{alg:apx} an \emph{iteration}, and enumerate them such that element $s_i$ is considered in iteration $i$. We also say that line \ref{itone}  corresponds to the first iteration.
Furthermore, we denote by $L_t$ and $k_t$ the state and the value of $L$ and $k$ after the $t$-th iteration, respectively, and for any $j\le k_t$, we call the subsequence $\langle L_t[j], prec[ L_t[j]], prec[prec[ L_t[j]]],\dots \rangle $ with length $j$ the \emph{implied} sequence of $L_t[j]$.

\begin{algorithm}[t]\caption{$Approximation$-$Algorithm(S=\langle s_1,\dots,s_n \rangle)$}\label{alg:apx}\small
	
	$\APXSORT \longleftarrow \text{approximately sort $S$ as shown in \cite{OPT-SORTING}}$~\;
	$d\longleftarrow c\cdot\log n$ ~~~~~~~~~~~~~~~~{\color{darkgray}// $\exists c$ s.t. w.h.p. $\disl(s)\le c\cdot\log n$ \cite{OPT-SORTING}}\;
	$L[1] \longleftarrow s_1$;
	$k\longleftarrow 1$\label{itone}\;
	\ForEach{$ i=2,\dots,n $\label{itstart}}{$ x\longleftarrow s_i $\;
		\lIf{$\pos(x,\APXSORT)<\pos(L[1],\APXSORT)$\label{firstcomp}}{$L[1]\longleftarrow x$}
		\Else{
		 $j\longleftarrow \max\{j\le k \colon \pos(L[j],\APXSORT)< \pos(x,\APXSORT) \}$\label{secondcomp}\;
		\If{$\pos(L[j],\APXSORT)+2d\le \pos(x,\APXSORT)$\label{twoddist}}{
		\lIf{$j=k$}{$k\longleftarrow k+1$}
		$L[j+1]\longleftarrow x$;
		$prec[x]\longleftarrow L[j]$\label{itend}\;}}}
	$lis[1] \longleftarrow L[k]$\;
	\ForEach{$i=2,\dots,k$}{
		$lis[i]\longleftarrow prec[lis[i-1]]$\;}
	\Return $lis$\;
\end{algorithm}

\begin{lemma}\label{lem:alg1}
	For every $t\le n$, after the $t$-th iteration of our Approximation-Algorithm,
	every implied sequence is a subsequence of $S$ that is $2d$-distant in $\APXSORT$. Moreover, $\langle L_t[1],\dots,L_t[k_t] \rangle$ is also $2d$-distant in $\APXSORT$.
\end{lemma}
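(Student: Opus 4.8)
The plan is to prove both claims simultaneously by induction on the iteration count $t$, since the second claim (the array $\langle L_t[1],\dots,L_t[k_t]\rangle$ is $2d$-distant) is what makes the update step work, and the first claim (every implied sequence is $2d$-distant) follows from carefully tracking how $prec$ pointers are set. The base case $t=1$ is trivial: $k_1=1$, the only implied sequence is $\langle s_1\rangle$, which is vacuously $2d$-distant, and $\langle L_1[1]\rangle$ is a single element, again vacuously $2d$-distant.

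For the inductive step, I would fix $t\ge 2$ and assume both statements hold after iteration $t-1$. Let $x=s_t$. There are three cases depending on which branch of the algorithm executes. If $\pos(x,\APXSORT)<\pos(L[1],\APXSORT)$, then only $L[1]$ changes, no $prec$ pointer is created or altered, so every implied sequence either is unchanged or is the new length-one sequence $\langle x\rangle$; all remain $2d$-distant. For the array claim, I must check $\langle x, L_{t-1}[2],\dots,L_{t-1}[k_{t-1}]\rangle$ is still $2d$-distant: the only new adjacency is $(x, L_{t-1}[2])$, and since the algorithm only ever places elements into position $2$ when they satisfy the $2d$-gap test against whatever was in position $1$ at that time, I need the invariant that $\pos(L[1],\APXSORT)$ is non-increasing over iterations — which it is, by inspection of the algorithm — so the gap only grows. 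The second case is when the $\max$ in line~\ref{secondcomp} is taken but the test on line~\ref{twoddist} fails; then nothing changes at all and both claims carry over verbatim. The third and main case is when $x$ is appended: $L[j+1]\leftarrow x$ and $prec[x]\leftarrow L_{t-1}[j]$. Here the implied sequence of the new $L_t[j+1]$ is $\langle x\rangle$ prepended to (the reverse of) the implied sequence of $L_{t-1}[j]$; by the inductive hypothesis the latter is $2d$-distant, and the line~\ref{twoddist} test guarantees exactly $\pos(L_{t-1}[j],\APXSORT)+2d\le\pos(x,\APXSORT)$, so the one new adjacency satisfies \eqref{eq:increasing-condition}, and the implied sequence of $L_t[j+1]$ is $2d$-distant. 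All other implied sequences are unchanged. For the array claim after the append, $\langle L_t[1],\dots,L_t[k_t]\rangle$ differs from $\langle L_{t-1}[1],\dots\rangle$ only in position $j+1$ (and possibly in length, if $j=k_{t-1}$); the new adjacencies are $(L_t[j], L_t[j+1]=x)$ and, when $j+1\le k_{t-1}$, also $(x, L_t[j+2])$. The first is handled by the line~\ref{twoddist} test directly. The second is the delicate point.

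The main obstacle I anticipate is precisely that second new adjacency $(x, L_t[j+2])$: I need $\pos(x,\APXSORT)+2d\le\pos(L_t[j+2],\APXSORT)$, and this is not immediate from the current iteration alone. The idea is to argue it via the ``smallest possible tail'' semantics of the $L$ array: when $L_{t-1}[j+2]$ was installed in position $j+2$ at some earlier iteration $t'<t$, it passed the $2d$-gap test against whatever occupied position $j+1$ at iteration $t'$, call it $y$; thus $\pos(y,\APXSORT)+2d\le\pos(L_{t-1}[j+2],\APXSORT)$. So it suffices to show $\pos(x,\APXSORT)\le\pos(y,\APXSORT)$. Since $x$ is being placed into position $j+1$ now and the contents of each position in $L$ are non-increasing in $\APXSORT$-position over time (each overwrite of $L[m]$ happens only when the new element has strictly smaller $\APXSORT$-position than the previous occupant — this needs to be extracted from lines~\ref{firstcomp}--\ref{itend} and stated as a standalone monotonicity sub-claim), we have $\pos(x,\APXSORT)\le\pos(L_{t-1}[j+1],\APXSORT)\le\pos(y,\APXSORT)$, chaining through all intermediate occupants of position $j+1$ between iteration $t'$ and iteration $t$. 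Combining the two inequalities closes the case. I would isolate this monotonicity fact — that for each fixed index $m$, $\pos(L_t[m],\APXSORT)$ is non-increasing in $t$ as long as $L[m]$ is defined — as an explicit auxiliary observation proved first (it is immediate from the fact that line~\ref{firstcomp} only overwrites $L[1]$ with something of smaller $\APXSORT$-position, and line~\ref{secondcomp}--\ref{twoddist} only write $x$ to $L[j+1]$ when $j$ is the \emph{largest} index with $\pos(L[j],\APXSORT)<\pos(x,\APXSORT)$, forcing $\pos(x,\APXSORT)\le\pos(L_{t-1}[j+1],\APXSORT)$ whenever $j+1$ was already occupied), and then the whole induction goes through cleanly.
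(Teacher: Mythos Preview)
Your proposal is correct and follows the same overall strategy as the paper: simultaneous induction on $t$ for both claims. The one place where you diverge is the ``delicate point'' --- the second new adjacency $(x,L_t[j+2])$ after an append --- where you trace back to the iteration $t'$ at which $L[j+2]$ was installed and invoke a separate monotonicity lemma. The paper handles this more directly using the induction hypothesis itself: since $j$ is the \emph{largest} index with $\pos(L_{t-1}[j],\APXSORT)<\pos(x,\APXSORT)$, we get $\pos(x,\APXSORT)<\pos(L_{t-1}[j+1],\APXSORT)$ whenever $j+1\le k_{t-1}$, and the hypothesis that $L_{t-1}$ is already $2d$-distant gives $\pos(L_{t-1}[j+1],\APXSORT)+2d\le\pos(L_{t-1}[j+2],\APXSORT)$; chaining these two yields the needed gap without any look-back to $t'$. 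The same shortcut works in your first case (overwriting $L[1]$): line~\ref{firstcomp} gives $\pos(x,\APXSORT)<\pos(L_{t-1}[1],\APXSORT)$, and the hypothesis gives $\pos(L_{t-1}[1],\APXSORT)+2d\le\pos(L_{t-1}[2],\APXSORT)$. So your monotonicity sub-claim, while true (the paper uses it later, in Lemma~\ref{lem:alg2}), is not needed here; the one-step fact ``the new occupant of $L[j]$ has strictly smaller $\APXSORT$-position than the old occupant'' together with the inductive $2d$-distance of $L_{t-1}$ suffices.
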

\begin{proof}
	For any $t$ and $j\le k_t$, let $S'=\langle s'_1,\dots,s'_m \rangle$ be the implied sequence of $L_t[j]$. Observe that to every element $s'_i\in S'$, such that $i>1$, the algorithm has assigned $s_{i-1}'$ as its predecessor. Since the predecessor of any element can only have been processed in an earlier iteration, $S'$ is a subsequence of $S$.
	
	It follows by induction, that the condition on line \ref{twoddist} in Algorithm~\ref{alg:apx} ensures that $S'$ is $2d$-distant in $\APXSORT$:
	It is trivial to see for $t=1$, thus, assume that every implied sequence before the $t$-th iteration is $2d$-distant in $\APXSORT$. 
	If $s_t$ is inserted into $L[j]$ (nothing changes in the other case), the implied sequence of $L_t[j]$ is equal to $s_t$ appended to the implied sequence of $L_{t-1}[j-1]$ (if it exists). By hypothesis and the condition on line \ref{twoddist}, $L_t[j]$ is still $2d$-distant, and since the other implied sequences do not change,
	the claim also holds after iteration $t$.
	
	That  $\langle L_t[1],\dots,L_t[k_t] \rangle$ is $2d$-distant in $\APXSORT$ also follows by induction: If $L[j]$ changes  
	(thus $L[j']$ does not change for all $j'\ne j$), then by hypothesis and the conditions in lines \ref{firstcomp}, \ref{secondcomp}, and  \ref{twoddist},  
	$\pos(L_{t-1}[j-1],\APXSORT) +2d \le \pos(L_{t}[j],\APXSORT)  < \pos(L_{t-1}[j],\APXSORT) \le \pos(L_{t-1}[j+1],\APXSORT)-2d$
	 (for all those entries that exist).
\end{proof}

\begin{lemma}\label{lem:alg2}
	Let $S'=\langle s'_1,\dots,s'_m \rangle$ be the sequence that our Approximation-Algorithm returns. Then, $S'$ is a longest subsequence  of $S$  that is $2d$-distant in $\APXSORT$.
\end{lemma}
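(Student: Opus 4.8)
The plan is to prove Lemma~\ref{lem:alg2} by the standard exchange/induction argument used for the Core-Algorithm, but adapted to the $2d$-distant constraint. By Lemma~\ref{lem:alg1} we already know that the returned sequence $S'$ \emph{is} a subsequence of $S$ that is $2d$-distant in $\APXSORT$, so only optimality (the ``longest'' part) remains. The key invariant to establish by induction on the iteration count $t$ is the following: for every $j \le k_t$, the element $L_t[j]$ has the smallest position in $\APXSORT$ among all elements $x$ such that, using only elements $s_1,\dots,s_t$, there is a length-$j$ subsequence of $S$ ending at $x$ that is $2d$-distant in $\APXSORT$. Equivalently, $\pos(L_t[j],\APXSORT)$ is minimal over all ``last elements'' of length-$j$ feasible subsequences seen so far. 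This is exactly the Core-Algorithm invariant with the comparison key being $\pos(\cdot,\APXSORT)$ and the append condition strengthened from ``strictly larger position'' to ``position larger by at least $2d$''.

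First I would set up the induction. The base case $t=1$ is immediate: $L_1[1]=s_1$ and no longer feasible sequence exists. For the inductive step, suppose the invariant holds after iteration $t-1$; consider iteration $t$ with $x=s_t$. I would argue that the array entries $\pos(L_{t-1}[1],\APXSORT),\dots,\pos(L_{t-1}[k_{t-1}],\APXSORT)$ are strictly increasing (this follows from the second part of Lemma~\ref{lem:alg1}, which in fact gives a gap of $2d$), so that the index $j$ computed on line~\ref{secondcomp} is well-defined and the algorithm is implicitly doing a monotone search. Then I would check two things: (a) every feasible length-$\ell$ subsequence among $s_1,\dots,s_t$ has last element with position at least $\pos(L_t[\ell],\APXSORT)$ — either it does not use $s_t$, in which case the bound is inherited from iteration $t-1$, or it ends at $s_t$, in which case its penultimate element is a feasible length-$(\ell-1)$ sequence, so by the inductive hypothesis $\pos(\text{penult},\APXSORT) \ge \pos(L_{t-1}[\ell-1],\APXSORT)$, and the $2d$-distance requirement forces $\pos(x,\APXSORT) \ge \pos(L_{t-1}[\ell-1],\APXSORT)+2d$, meaning $x$ was a legal candidate for slot $\ell$, so $j\ge \ell-1$ and the algorithm set $L_t[\ell]$ to something with position at most $\pos(x,\APXSORT)$; and (b) the entry $L_t[\ell]$ is itself the last element of some feasible length-$\ell$ subsequence (its implied sequence, by Lemma~\ref{lem:alg1}). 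Combining (a) and (b) reestablishes the invariant.

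Finally, to conclude the lemma, let $S^\star=\langle s^\star_1,\dots,s^\star_{m^\star}\rangle$ be \emph{any} subsequence of $S$ that is $2d$-distant in $\APXSORT$, of maximum length $m^\star$. Applying the invariant at $t=n$ with $j=m^\star$: since $S^\star$ is a feasible length-$m^\star$ subsequence using all of $s_1,\dots,s_n$, the slot $L_n[m^\star]$ exists, hence $k_n \ge m^\star$. The algorithm returns a sequence of length exactly $k_n$ (the implied sequence of $L_n[k_n]$, which Lemma~\ref{lem:alg1} certifies is $2d$-distant in $\APXSORT$), so $|S'| = k_n \ge m^\star$, and by maximality of $m^\star$ equality holds; thus $S'$ is a longest such subsequence.

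I expect the main obstacle to be the bookkeeping in step (a) of the inductive step: one has to be careful that when $x=s_t$ \emph{cannot} be appended anywhere (line~\ref{twoddist} fails, or even line~\ref{firstcomp}/\ref{secondcomp} behavior), the invariant is still preserved for all $\ell$, and conversely that when $x$ replaces $L[\ell]$ for some $\ell \le k_{t-1}$ one has not destroyed feasibility of the implied sequence — this is where Lemma~\ref{lem:alg1} does the heavy lifting and must be invoked precisely. A secondary subtlety is handling slot $1$ separately (line~\ref{firstcomp}), since there is no $2d$-gap condition for a length-$1$ sequence; the invariant for $j=1$ just says $L_t[1]$ has the globally smallest position in $\APXSORT$ among $s_1,\dots,s_t$, which the algorithm maintains directly. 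Everything else is the classical patience-sorting correctness argument transported to the redefined comparison order.
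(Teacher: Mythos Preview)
Your proposal is correct and rests on the same core idea as the paper---an induction showing that the position (in $\APXSORT$) of $L_t[j]$ never exceeds the position of the last element of any feasible length-$j$ subsequence seen so far---but the two proofs organize the induction differently. You induct on the iteration $t$ and maintain the full patience-sorting invariant (global minimality of every slot $L_t[j]$), which forces you into the case analysis you correctly anticipate in step~(a). The paper instead fixes one optimal $2d$-distant subsequence $S^*=\langle s^*_1,\dots,s^*_{m^*}\rangle$ up front and inducts on $i$ along $S^*$, proving only the weaker statement $\pos(L_{t^*_i}[i],\APXSORT)\le\pos(s^*_i,\APXSORT)$ at the moment $s^*_i$ is processed; the monotonicity of $\pos(L_t[j],\APXSORT)$ in $t$ (which you also observe) then carries the bound forward. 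The paper's route is shorter because it never has to reason about slots other than the current $i$ or about iterations where no element of $S^*$ appears, while your stronger invariant gives a complete characterization of $L_t$ at the price of more bookkeeping. Either argument suffices.
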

\begin{proof}
Lemma~\ref{lem:alg1} implies that $S'$ is a subsequence of $S$ and $2d$-distant in $\APXSORT$. 
Let $S^*=\langle s^*_1,\dots,s^*_{m^*} \rangle$ be a longest subsequence of $S$  that is $2d$-distant in $\APXSORT$. 
We now show that $|S'|\ge |S^*|$.
In particular, we show by induction that after iteration $t^*_i$,
  $\pos(L_{t^*_i}[i],\APXSORT) \le \pos(s^*_i,\APXSORT)$.
For the base case, consider iteration $t^*_1$, where $s^*_1$ is processed. Either $s^*_1$ gets inserted into some position $j\ge 1$, i.e., $L_{t^*_1}[j] = s^*_1$, or not.
If it gets inserted, then by conditions in lines \ref{firstcomp} or \ref{secondcomp} in Algorithm~\ref{alg:apx}, $\pos(L_{t^*_1}[1],\APXSORT) \le \pos(s^*_1,\APXSORT)$. 
	If not, then it must hold that $L_{t^*_1}[1]=L_{t^*_1-1}[1]$ and thus $ \pos(L_{t^*_1}[1],\APXSORT) < \pos(s^*_1,\APXSORT)\, .$
	
For the step case, consider iteration $t_{i+1}^*$, where $s_{i+1}^*$ is processed, and observe that the value of $k$ only increases during the algorithm, and for any $t'<t$ and $j\le k_{t'}$ it holds that 
$\pos(L_{t'}[j],\APXSORT) \ge \pos(L_t[j],\APXSORT)\, . $
%
Therefore, and by induction hypothesis and the assumption that $S^*$ is $2d$-distant in $\APXSORT$, 
 $\pos(L_{t_{i+1}^*-1}[i],\APXSORT)+2d \le \pos(s^*_{i+1},\APXSORT)\, .$ 
And Lemma~\ref{lem:alg1} implies, $\pos(L_{t_{i+1}^*-1}[i],\APXSORT)+2d \le \pos(L_{t_{i+1}^*-1}[i+1],\APXSORT)\,.$
Thus, if $s_{i+1}^*$ does not get inserted, it is because $\pos(L_{t_{i+1}^*-1}[i+1],\APXSORT) < \pos(s^*_{i+1},\APXSORT)$, and if it gets inserted, it will be in some position $j\ge i+1$. In any case, the hypothesis also holds after the iteration iteration ${t_{i+1}^*}$, which means that $S'$ has indeed maximum length.	
\end{proof}

\subsection{Proof of Theorem~\ref{thm:upper}}

We now prove the initially stated Theorem~\ref{thm:upper}, which for convenience, we restate here:
\setcounter{theorem}{0}
\begin{theorem}[Upper Bounds]
	For any sequence $S$ that contains $n$ distinct elements, our {Approximation-Algorithm} computes an $O(\log n)$-approximation of the longest increasing sequence of $S$, in $O(n\log n)$ time, with probability at least $1-\frac{1}{n}$.
\end{theorem}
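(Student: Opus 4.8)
\noindent\textit{Proof proposal.} The plan is to isolate the only randomized step of Algorithm~\ref{alg:apx}, namely the initial call to the sorting procedure of Theorem~\ref{thm:sort}. Let $E$ be the event that this call returns a sequence $\APXSORT$ with $\disl(\APXSORT)\le c\log n=:d$; by Theorem~\ref{thm:sort} it holds that $\Pr[E]\ge 1-\frac1n$. Everything in Algorithm~\ref{alg:apx} after the sort is deterministic, so I would condition on $E$, prove correctness, the approximation guarantee, and the time bound under $E$, and then conclude via the probability bound on $E$.

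Conditioning on $E$, correctness of the output follows from the lemmas already established. By Lemma~\ref{lem:alg2} the returned sequence $S'$ is a subsequence of $S$ that is $2d$-distant in $\APXSORT$, and it is longest among all such subsequences. To see that $S'$ is increasing in the true order I would use the (mild extension of the) argument of Lemma~\ref{lem:relative-order}: if $x,x'\in S'$ are consecutive in $S'$, then $\pos(x',\APXSORT)\ge\pos(x,\APXSORT)+2d$, and since $|\pos(\cdot,\APXSORT)-\pos(\cdot,\SORT)|\le d$ we get $\pos(x,\SORT)\le\pos(x,\APXSORT)+d\le\pos(x',\APXSORT)-d\le\pos(x',\SORT)$, with strict inequality because distinct elements occupy distinct positions in $\SORT$; hence $x<x'$. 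Thus $S'$ is a genuine increasing subsequence of $S$.

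For the approximation factor I would combine Lemma~\ref{lem:alg2} with Lemma~\ref{lem:apx-factor}: $|S'|$ equals the length of a longest subsequence $S^*$ of $S$ that is $2d$-distant in $\APXSORT$, and $|S^*|\ge\frac1{2d}|LIS(S)|=\frac1{2c\log n}|LIS(S)|$, so $S'$ is increasing and has length at least $\frac1{2c\log n}|LIS(S)|$, i.e.\ an $O(\log n)$-approximation with hidden constant $2c$ inherited from Theorem~\ref{thm:sort} (hence depending on $p$). For the running time: the sort costs $O(n\log n)$ by Theorem~\ref{thm:sort}; precomputing an array with $\pos(\cdot,\APXSORT)$ for every element takes $O(n)$ and makes every position query in lines~\ref{firstcomp}, \ref{secondcomp}, \ref{twoddist} run in $O(1)$; and by Lemma~\ref{lem:alg1} the sequence $\langle L[1],\dots,L[k]\rangle$ is $2d$-distant, hence strictly increasing in $\APXSORT$-position, so the index on line~\ref{secondcomp} can be found by binary search in $O(\log n)$. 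Each of the $n$ iterations therefore costs $O(\log n)$, the reconstruction loop costs $O(k)=O(n)$, and the total is $O(n\log n)$. Since $E$ fails with probability at most $\frac1n$, all of this holds with probability at least $1-\frac1n$, which is the claim.

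The statement is essentially a bookkeeping assembly of Lemmas~\ref{lem:relative-order},~\ref{lem:apx-factor},~\ref{lem:alg1}, and~\ref{lem:alg2}, so the only points needing care are twofold: first, that $2d$-distance (gaps \emph{at least} $2d$, not exactly $2d$) still forces correct relative order, which is why the small extension of Lemma~\ref{lem:relative-order} in the second paragraph is needed to conclude the output is increasing; and second, that the time analysis relies on the monotonicity of $\langle L[1],\dots,L[k]\rangle$ in $\APXSORT$-position (Lemma~\ref{lem:alg1}) so that line~\ref{secondcomp} costs $O(\log n)$ per iteration rather than $O(n)$. Neither is a serious obstacle; the content of the theorem is carried by Theorem~\ref{thm:sort} and the preceding lemmas.
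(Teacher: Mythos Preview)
Your proposal is correct and follows essentially the same approach as the paper's own proof: condition on the high-probability event that the approximate sort has dislocation at most $d=c\log n$, then invoke Lemmas~\ref{lem:relative-order}--\ref{lem:alg2} for correctness and the approximation guarantee, and account for the running time via the $O(n\log n)$ sort plus $O(\log n)$ per iteration using binary search. If anything, you are slightly more careful than the paper in two places it glosses over: you spell out why the $2d$-distance condition (gaps at least $2d$, not exactly $2d$) still forces the correct relative order, and you explicitly justify the binary search on line~\ref{secondcomp} via the monotonicity of $\langle L[1],\dots,L[k]\rangle$ from Lemma~\ref{lem:alg1}.
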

\begin{proof}
	Let $d\in O(\log n)$ according to Theorem~\ref{thm:sort}, such that with probability $1-\frac{1}{n}$, the maximum dislocation in $\APXSORT$ is at most $d$. If this is true, by  Lemmata~\ref{lem:relative-order}-\ref{lem:alg2}, our Approximation-Algorithm returns a subsequence $S'$ of $S$ that is increasing, and that has length at least $\frac{LIS(S)}{2d}\in \Omega\big(\frac{LIS(S)}{\log n}\big)$.
	
	The running time consists of the initial sorting, which by Theorem~\ref{thm:sort} takes $O(n\log n)$ time\footnote{By modifying this algorithm so that it returns also the mapping from each element in $S$ to its position in $\APXSORT$ we can obtain the new comparison results in the same time.}, and the $n$ iterations of the algorithm, which take $O(\log n)$ time each if binary search is used to implement line 10. 
	The final construction of the output takes $O(k)$ time, where $k\le LIS(S)\le n$ is the length of the approximation.
\end{proof}

\section{Lower Bound on the Approximation Factor}\label{sec:lowerbound}

We continue this paper with a lower bound on the approximation factor, that implies that the upper bound we showed in Theorem~\ref{thm:upper} is tight up to constant factors. In particular, we prove Theorem~\ref{thm:lower}, which we restate here:

\setcounter{theorem}{1}
\begin{theorem}[Lower Bound -- Approximation Factor]
	There exists a collection of sequences $\mathcal{S}$ (permutations of length $n$) and a probability distribution on $\mathcal{S}$, such that no algorithm can return an $O(\log n)$-approximation (for some suitable hidden constant that depends on $p$) of the longest increasing subsequence with probability $1-\frac{1}{n}$.
\end{theorem}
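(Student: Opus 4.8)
The plan is to construct a family $\mathcal{S}$ of permutations that are pairwise indistinguishable under persistent comparison errors with constant probability, yet whose longest increasing subsequences are ``located'' in structurally incompatible ways, so that any increasing subsequence of more than a constant number of elements that works for one permutation cannot be increasing in too many of the others. Concretely, I would fix a block structure: partition $\{1,\dots,n\}$ into $\Theta(\log n)$ blocks of size $\Theta(n/\log n)$, where inside each block the elements are ``scrambled'' in a way that looks the same under noise, and the planted long increasing subsequence of length $\Theta(\log n)$ picks exactly one element from each block. The choice of \emph{which} element in each block belongs to the planted sequence, together with a global decreasing arrangement of the blocks' ranges except along the planted diagonal, is what indexes the members of $\mathcal{S}$. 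The key point is that with the right randomization over these choices, the observed comparison matrix $C$ has (nearly) the same distribution regardless of which member was chosen, because every comparison that would reveal the planted element of a block is between two elements whose true ranks are close (within $O(\log n)$), hence is flipped with constant probability and carries essentially no information.

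The proof would then proceed in the following steps. First, I would make precise the ``masking'' construction and prove an indistinguishability lemma: for any two permutations $S_1,S_2\in\mathcal{S}$ that differ only in the planted choice within a single block, the total variation distance between the distributions of $C$ given $S_1$ and given $S_2$ is small (say $\le 1/4$), by a coupling/Hellinger argument over the independent comparison outcomes, using that the differing comparisons involve rank-gaps $O(\log n)$ and thus each contributes only $O(1/\log n)$ to the squared Hellinger distance, and there are $O(\log n)$ of them. Second, I would invoke an averaging/Yao-type argument: fix any (possibly randomized) algorithm $A$; since $A$ sees only $C$, if it outputs a correct $(c\log n)$-approximation with probability $\ge 1-1/n$ on the uniformly random $S\in\mathcal{S}$, then by the indistinguishability lemma its output distribution is nearly the same across the whole family, so the \emph{same} output sequence $S'$ must be a valid increasing subsequence of length $\ge \frac{1}{c\log n}\cdot\Theta(\log n) = \Omega(1)$ simultaneously for a constant fraction of the members of $\mathcal{S}$. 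Third — the combinatorial heart — I would show this is impossible: the family is designed so that any increasing subsequence of $S'$ with at least, say, two elements can be increasing in only a $o(1)$ fraction of $\mathcal{S}$, because the relative order of any two blocks (and of the non-planted bulk of each block) is ``reversed'' in most members, forcing $S'$ to have length $1$ in the overwhelming majority, contradicting the $\Omega(1)$ lower bound on its length. Turning the $1-1/n$ success probability into a bound over all of $\mathcal{S}$ requires the family to have size only $\mathrm{poly}(n)$ (or even $n^{O(1)}$), so that a $1-1/n$-per-instance guarantee cannot be salvaged by instance-dependent behavior; I would size $\mathcal{S}$ accordingly.

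The main obstacle I anticipate is the tension in the block-size budget: the planted sequence must have length $\Theta(\log n)$ (to make $\Omega(\log n)$ the approximation barrier), the comparisons revealing planted elements must have rank-gap $O(\log n)$ (so that noise hides them), \emph{and} the blocks must be numerous enough and arranged so that genuine increasing subsequences across distinct members stay short — all while keeping $|\mathcal{S}|$ polynomial so the $1-1/n$ failure allowance is meaningful. Balancing ``each block contributes negligible information'' against ``the planted sequence is long'' is delicate: if blocks are too small, the planted sequence is short; if they are too large, the within-block scramble that hides the planted element must itself span large rank-gaps, which noise can no longer mask. I expect the resolution to be a two-level construction — $\Theta(\log n)$ coarse blocks, each internally organized so that the planted element sits within an $O(\log n)$-window of its block's other ``decoys'' — together with a careful accounting showing the coarse block \emph{ordering} (chosen from a polynomial-size set of ``staircase-compatible'' arrangements) is what ultimately pins down the approximation lower bound, with the within-block noise providing the indistinguishability and the cross-block reversals providing the combinatorial impossibility.
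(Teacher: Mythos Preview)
Your indistinguishability step contains a concrete error. You assert that a comparison whose true outcome differs between two permutations ``contributes only $O(1/\log n)$ to the squared Hellinger distance'' because the rank-gap is $O(\log n)$. But in this model the error probability is a fixed constant $p$ independent of rank, so each differing comparison contributes the \emph{constant} $1-2\sqrt{p(1-p)}$ to the squared Hellinger distance. With $\Theta(\log n)$ differing comparisons the Bhattacharyya coefficient is $(2\sqrt{p(1-p)})^{\Theta(\log n)}=n^{-\Theta(1)}$, hence the total variation distance is $1-n^{-\Theta(1)}$, not $\le 1/4$. A bounded-TV indistinguishability lemma is simply false here, and this is not a technicality you can patch: the distributions of $C$ under different members of any family of this kind \emph{are} almost perfectly distinguishable, so a TV/Yao argument of the form you sketch cannot close.

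The paper avoids this by replacing TV distance with a pointwise \emph{likelihood-ratio} bound and by using a much smaller, simpler family. Set $\eta=\lceil\frac{\log n}{2\log((1-p)/p)}\rceil$; let $S^*$ place the $\eta$ largest elements first in increasing order and the remaining $n-\eta$ in decreasing order, and for $1\le i<\eta$ let $S_{(i)}$ be $S^*$ with the single element $n$ moved to position $i$. Any $S_{(i)}$ differs from $S^*$ in fewer than $\eta$ pairwise orders, so for every outcome vector $C$,
\[
\Pr(R(S_{(i)})=C)\;\ge\;\Big(\tfrac{p}{1-p}\Big)^{\eta}\Pr(R(S^*)=C)\;\ge\;n^{-1/2}\,\Pr(R(S^*)=C).
\]
If $A$ succeeds on $S^*$ it must output at least two of the first $\eta$ positions; map each successful $C$ to $S_{(i_C)}$ where $i_C$ is the position of the \emph{first} element $A$ outputs. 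On input $S_{(i_C)}$ with the same $C$, that first output element is $n$, so the returned sequence is not increasing and $A$ fails. Summing over $C$ and averaging over the $\eta$ members gives failure probability at least $\tfrac{1}{\eta}\cdot n^{-1/2}\cdot(1-\tfrac{\eta}{n})>\tfrac{1}{n}$. No block construction, no Hellinger bookkeeping, and the family has only $\Theta(\log n)$ members; the whole weight is carried by the likelihood-ratio inequality and the one-line ``first output position'' mapping.
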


\setcounter{theorem}{8}

Our proof can be seen as a generalization of the lower bound on the maximum dislocation for sorting (see proof of Theorem 9 in \cite{ISAAC}), where it is shown that two elements whose ranks differ by less than $O(\log n)$ are likely to be indistinguishable by any algorithm, and hence to appear in the wrong relative order. 
Intuitively, the argument there is as follows: consider the sorted sequence and the sequence obtained by swapping two elements, and assume that the comparison outcomes on these sequences look identically. It turns out that the probability of this happening is larger than $\frac{1}{n}$, whenever the rank difference is smaller than $O(\log n)$, since only a small number of comparison outcomes must differ.

This is not enough in our case, since an algorithm could simply ignore such two elements. 
For instance, consider an increasing sequence of $c$ adjacent elements. If the first and the last element are swapped, the algorithm could simply return the subsequence without these two elements and be almost optimal.
A first idea to fix this problem could be to consider the case, where one observes the whole increasing sequence to be reversed. However, to have this happen with probability larger than $\frac{1}{n}$, $c$ needs to be smaller than $O(\sqrt{\log n})$, thus implying a weaker lower bound.

Instead, we shall use a collection of similar sequences (more than two),  such that if an algorithm \emph{succeeds} on one of these sequences  it must fail on another one.

\begin{proof}
	We say that an algorithm \emph{succeeds} if it returns a $(c\log n)$-approximation for any constant $c<\frac{1}{2\log \frac{1-p}{p}}$, otherwise we say it \emph{fails}.
We shall first define our collection $\mathcal{S}$ of similar sequences.
	Let $\eta:=\lceil\frac{\log n}{2\log \frac{1-p}{p}}\rceil$.  Let $S^*$ denote the sequence, in which the largest $\eta$ elements appear first in increasing order and then the remaining elements appear in decreasing order,
	\[S^* := \langle n-\eta+1, \dots, n-1,\boldsymbol{n},~~n-\eta,\dots,1\rangle \, .\]
	Furthermore, for $1\le i<\eta$, let $S_{(i)}$ be the sequence obtained from $S^*$ when the largest element is moved to position $i$, 
	\[S_{(i)} := \langle n-\eta+1,  \dots, n-\eta+(i-1),\boldsymbol{n},n-\eta+i,\dots, n-1,~~n-\eta,\dots,1\rangle\, .\]
	%
	Now, let $\mathcal{S}:=\{S^*,S_{(1)},S_{(2)},\dots,S_{(\eta-1)}\}$ (note that basically $S^*=S_{(\eta)}$) and let $\mathcal{P}$ be the uniform distribution over $\mathcal{S}$. We will show (proof by contradiction) that no algorithm succeeds on this pair ($\mathcal{S},\mathcal{P}$) with probability at least $1-\frac{1}{n}$.
	
	Assume towards a contradiction that algorithm $A$ succeeds with high probability on a sequence $S'$ chosen uniformly at random from $\mathcal{S}$, 
	i.e., \[\Pr(A(S') \text{ succeeds}) = \sum_{i=1}^{\eta}\Pr(A(S_{(i)})\text{ succeeds})\cdot \Pr(S'=S_{(i)}) 
	\ge 1-\frac{1}{n}\, .\] 
	This implies that 
	\begin{equation}\label{eq:1}
	P:=\Pr(A(S^*)\text{ succeeds}) \ge 1-\frac{\eta}{n}\, ,
	\end{equation}
	since by hypothesis and assuming the case where the algorithm succeeds on all the other input sequences (i.e., best case for the algorithm, worst case for the proof),  $\frac{P}{\eta}+\frac{\eta-1}{\eta} \ge 1-\frac{1}{n}$ resolves to \eqref{eq:1}.
	
	Let $C\in \{\prec,\succ\}^{\binom{n}{2}}$, then $A(S,C)$ means that algorithm $A$ runs on sequence $S$ and observes comparison outcomes $C$.
	Now, consider the set of all comparison outcomes that the algorithm can observe and let 
	$ \mathcal{C}:= \{C\in\{\prec,\succ\}^{\binom{n}{2}} \colon A(S^*,C) \text{ succeeds} \} $
	denote the set of all possible comparison outcomes for which $A$ succeeds on input $S^*$.
	We define $R(S)\in\{\prec,\succ\}^{\binom{n}{2}}$ to be the random variable
	corresponding to the comparison outcomes as they would be observed by the algorithm when the input sequence is $S$. Then, the probability that $A(S^*)$ succeeds is expressed by the total probabilities of the events that $A$ observes comparison outcomes in $\mathcal{C}$,
	\begin{equation}\label{eq:2}
	P=\Pr(A(S^*)\text{ succeeds}) = \sum_{C\in\mathcal{C}}\Pr(R(S^*)=C)\, .
	\end{equation}
	
	Before we continue the proof, we shall first show the following lemma.
	\begin{lemma}
		$\forall S\in \mathcal{S}\setminus\{S^*\}$ and $C\in \{\prec,\succ\}^{\binom{n}{2}}$, $\Pr(R(S)=C) > \Pr(R(S^*)=C)\cdot\left(\frac{p}{1-p}\right)^\eta\,.$
	\end{lemma}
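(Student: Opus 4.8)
The plan is to exploit the fact that, for each $i\in\{1,\dots,\eta-1\}$, the sequences $S^*$ and $S_{(i)}$ are obtained from one another by relocating the single largest element $n$, so that only a handful of position-pairs see their true relative order change. First I would write down explicitly, for a generic position $p$, the value sitting at $p$ in each sequence: positions $1,\dots,\eta$ hold $n-\eta+1,\dots,n$ with $n$ last in $S^*$, and hold $n-\eta+1,\dots,n-\eta+i-1,\,n,\,n-\eta+i,\dots,n-1$ in $S_{(i)}$, while positions $\eta+1,\dots,n$ are identical in the two sequences. From this I would run a short case analysis over the location of a pair $(a,b)$ with $a<b$ and show that the only pairs whose true relative order differs between $S^*$ and $S_{(i)}$ are the pairs $(i,b)$ with $b\in\{i+1,\dots,\eta\}$: at position $i$, $S^*$ carries the small value $n-\eta+i$ (smaller than every value at positions $i+1,\dots,\eta$) whereas $S_{(i)}$ carries $n$ (larger than all of them); for every other pair, the two compared values are either both unchanged, or both stay on the same side of the threshold separating the ``increasing block'' $\{1,\dots,\eta\}$ from the ``decreasing block'' $\{\eta+1,\dots,n\}$, so the order is preserved. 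This yields exactly $\eta-i$ ``flipped'' pairs.

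Next I would invoke the independence of the comparison outcomes across distinct position-pairs, which is part of the error model. Writing $C=(c_{(a,b)})_{a<b}$, the joint probability $\Pr(R(S)=C)$ factors as a product over pairs, and for every non-flipped pair the factor is the same for $S^*$ and for $S_{(i)}$ (the true order, hence the outcome distribution $(1-p,p)$ or $(p,1-p)$, coincides). Hence the likelihood ratio $\Pr(R(S_{(i)})=C)/\Pr(R(S^*)=C)$ telescopes to a product over the $\eta-i$ flipped pairs only. On a flipped pair $(i,b)$ the observed outcome agrees with the $S^*$-order with probability $1-p$ and with the $S_{(i)}$-order with probability $p$ (or vice versa), so the corresponding factor equals either $\tfrac{p}{1-p}$ or $\tfrac{1-p}{p}$; since $0<p<\tfrac12$ it is in any case at least $\tfrac{p}{1-p}$.

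Putting the pieces together gives $\Pr(R(S_{(i)})=C)\ge\Pr(R(S^*)=C)\cdot\left(\tfrac{p}{1-p}\right)^{\eta-i}$, and because $i\ge1$ and $\tfrac{p}{1-p}<1$ we have $\left(\tfrac{p}{1-p}\right)^{\eta-i}>\left(\tfrac{p}{1-p}\right)^{\eta}$, which yields the claimed strict inequality uniformly in $i$, i.e.\ for every $S\in\mathcal{S}\setminus\{S^*\}$ (the degenerate case $\Pr(R(S^*)=C)=0$ does not occur, since $p\in(0,\tfrac12)$ makes every outcome vector positively probable, and in any event the inequality would hold trivially there). I expect the only delicate point to be the case analysis establishing that precisely the $\eta-i$ pairs $(i,b)$ with $b\in\{i+1,\dots,\eta\}$ flip — in particular, checking that pairs straddling the increasing and decreasing blocks, and pairs inside the increasing block not involving position $i$, never flip; the subsequent independence and likelihood-ratio bookkeeping is then routine.
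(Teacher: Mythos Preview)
Your proposal is correct and follows essentially the same approach as the paper: both arguments identify that at most $\eta-i<\eta$ position-pairs (namely those involving the new location of the element $n$) have their true relative order flipped between $S^*$ and $S_{(i)}$, and then bound the likelihood ratio by $\bigl(\tfrac{p}{1-p}\bigr)^{\eta}$ using $\tfrac{p}{1-p}<1$. The paper phrases this via the count $|E(S,C)|<|E(S^*,C)|+\eta$ of wrong comparisons, whereas you phrase it as a per-pair factor bounded below by $\tfrac{p}{1-p}$; these are equivalent, and your case analysis of the flipped pairs is in fact more explicit than the paper's one-line assertion.
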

	\begin{proof}
		Consider $S^*=\langle s^*_1,\dots,s^*_n \rangle$ and $C$ and let $E(S^*,C)$ be the set of wrong comparison results, i.e., the set of pairs ($s^*_i,s^*_j$) with $i<j$ such that either $s^*_i<s^*_j$ and $c_{(i-1)n+j}=``\succ"$ (i.e., $s^*_i\succ s^*_j$) or $s^*_i>s^*_j$ and $c_{(i-1)n+j}=``\prec"$.
		Thus,
		\[\Pr(R(S^*)=C) = (1-p)^{{\binom{n}{2}}-|E(S^*,C)|}\cdot p^{|E(S^*,C)|} = (1-p)^{{\binom{n}{2}}}\cdot \left(\frac{p}{1-p}\right)^{|E(S^*,C)|}\, .
		 \]
		Now consider $S=S_{(k)}=\langle s_1,\dots,s_n \rangle$ and observe that only the relative order of the pairs $(s_k,s_j)$ with $k<j\le \eta$, changed compared to $S^*$. This implies that there can be at most $\eta-k<\eta$ additional wrong comparison results, i.e., $|E(S,C)|< |E(S^*,C)| + \eta$.
		Therefore, and since $\frac{p}{1-p} \le 1$,
		\begin{align*}
			\Pr(R(S)=C) &=  (1-p)^{{\binom{n}{2}}}\cdot \left(\frac{p}{1-p}\right)^{|E(S,C)|} \\&> (1-p)^{{\binom{n}{2}}}\cdot \left(\frac{p}{1-p}\right)^{|E(S^*,C)|+\eta} = \Pr(R(S^*)=C)\cdot\left(\frac{p}{1-p}\right)^\eta\, .  \tag*{\qedhere}
		\end{align*}
	\end{proof}

	\noindent
	\textit{Continuation of the Proof of Theorem 2. }
	Now notice that in order to succeed, $A$ needs to return at least two of the first $\eta$ elements in $S^*$.	
	Therefore, we can  map every $C\in \mathcal{C}$ to a (not necessarily unique) sequence of $\mathcal{S}$ as follows: for each $C\in\mathcal{C}$, let $i_C$ be the position of the first element that $A(S^*,C)$ returns and let $S(C) := S_{(i_C)}$. (Note that $i_C<\eta$ as otherwise $A$ does not return at least two elements of the first $\eta$ elements in $S^*$.)
	For each $S\in \mathcal{S}\setminus\{S^*\}$,
	\begin{align*}
		\Pr(A(S) \text{ fails}) &\ge \sum_{C\in\mathcal{C}\colon S=S(C)}\Pr(R(S)=C)> \sum_{C\in\mathcal{C}\colon S=S(C)}\Pr(R(S^*)=C)\cdot\left(\frac{p}{1-p}\right)^\eta
	\end{align*}
And as a consequence, for $S'\in \mathcal{S}$ chosen uniformly at random, 
	\begin{align*}
		\Pr(A(S') \text{ fails}) 
		&\ge \sum_{S\in \mathcal{S}\setminus\{S^*\}}\Pr(S'=S)\cdot\Pr(A(S) \text{ fails})\\
		&> \sum_{S\in \mathcal{S}\setminus\{S^*\}}\frac{1}{\eta } \sum_{C\in\mathcal{C}\colon S=S(C)}\Pr(R(S^*)=C)\cdot\left(\frac{p}{1-p}\right)^\eta\\
		&= \frac{1}{\eta } \left(\frac{p}{1-p}\right)^\eta \sum_{S\in \mathcal{S}\setminus\{S^*\}} \sum_{C\in\mathcal{C}\colon S=S(C)}\Pr(R(S^*)=C)\\
		&\ge \frac{1}{\eta } \left(\frac{p}{1-p}\right)^\eta \sum_{C\in\mathcal{C}}\Pr(R(S^*)=C)
		\ge \frac{1}{\eta } \left(\frac{p}{1-p}\right)^\eta \left(1-\frac{\eta}{n}\right)\,,
	\end{align*}
where from line 3 to line 4 we use that every instance of comparison results is mapped to exactly one sequence, and on the last line we use Equations \eqref{eq:1} and \eqref{eq:2}.
Now, observe that for $n$ large enough, $\left(1-\frac{\eta}{n}\right)> \frac{1}{2}$ and that, by our choice of $\eta$, $\left(\frac{p}{1-p}\right)^\eta \ge \frac{1}{\sqrt{n}}$. Therefore,
\[	\Pr(A(S') \text{ fails}) > \frac{2\log \frac{1-p}{p}}{\log n}\cdot \frac{1}{\sqrt{n}}\cdot \frac{1}{2} > \frac{1}{n}\, .\]
However, this contradicts our assumption that $A$ succeeds with high probability.
\end{proof}

The lower bound shown in Theorem~\ref{thm:lower} holds for all deterministic algorithms, but can be expanded to also hold for probabilistic algorithms as explained in the following remark.

\begin{remark}
	To make the lower bound on the approximation factor work also for any randomized algorithm $A$, we can turn $A$ into a deterministic version by fixing a sequence $\lambda\in\{0,1\}^t$ random bits that can be used by the algorithm. Thus, for the resulting deterministic algorithm $A_\lambda$, the lower bound holds. Let $p_\lambda$ be the probability to generate the sequence $\lambda$ of random bits. To lower bound the probability that $A(S')$ fails, where $S'$ is chosen uniformly at random from $\mathcal{S}$, one simply needs to sum over all $\lambda$ the probabilities that $A_\lambda$ fails multiplied by $p_\lambda$, i.e.,
	$\Pr(A(S') \text{ fails}) = \sum_{\lambda\in \{0,1\}^t}\Pr(A_\lambda(S') \text{ fails})\cdot p_\lambda \ge \frac{1}{n}\sum_{\lambda\in \{0,1\}^t} p_\lambda=\frac{1}{n}\, .$
\end{remark}

\section{Lower Bound on the Running Time}\label{sec:lowerbound-time}
We complement this paper by showing that the running time of our Approximation-Algorithm is asymptotically optimal. 
In \cite{Fredman75}, it is shown that (in the error-free model) computing the longest increasing subsequence is at least as hard as sorting. 
We will use this proof to informally show Theorem~\ref{thm:lower-time} which we restate here (we postpone a formal proof to the full version of the paper):
\setcounter{theorem}{2}
\begin{theorem}[Lower Bound -- Running Time] 
	Any $\log n$-approximation algorithm for longest increasing subsequence requires $\Omega(n \log n)$ comparisons, even if no  errors occur.
\end{theorem}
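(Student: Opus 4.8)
The plan is to reduce sorting to $(\log n)$-approximate longest increasing subsequence, mimicking Fredman's classical argument \cite{Fredman75} that exact LIS is at least as hard as sorting, but with enough slack to absorb the $\log n$ approximation factor. First I would recall the shape of Fredman's construction: given $m$ elements $a_1,\dots,a_m$ to be sorted, one builds a sequence on which the longest increasing subsequence encodes the sorted order, so that any comparison-based algorithm computing the LIS must have performed enough comparisons to have sorted the $a_i$'s, yielding the $\Omega(m\log m)$ bound via the standard information-theoretic (decision-tree) lower bound on sorting. The obstacle is that a $(\log n)$-approximate answer need not reveal the sorted order of all elements — it may omit a $(1-1/\log n)$-fraction of them — so a single copy of a sorting instance is not rigid enough.

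The fix I would use is \emph{amplification by blocking}: replace each element $a_i$ that must be sorted by a block $B_i$ of $\Theta(\log n)$ consecutive, mutually increasing "dummy" elements, all lying in a narrow value-interval $I_i$, where the intervals $I_1,\dots,I_m$ are ordered on the real line according to the \emph{true} order of $a_1,\dots,a_m$ (which the sorting algorithm is trying to discover). Arrange the blocks in the input sequence in the \emph{given} (unsorted) order of the $a_i$. Then the longest increasing subsequence is obtained by selecting, in sorted-value order, one maximal run of blocks that appears in increasing order — and because each block contributes $\Theta(\log n)$ to the LIS length, a $(\log n)$-approximation of the LIS is forced to contain at least one full block, hence at least one element from "the right" blocks, in a way that still pins down the relative order of the corresponding $a_i$'s. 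More precisely I would choose the number of blocks and block lengths so that the optimum LIS has length $\Theta(m\log n)$ and any subsequence of length $\ge \frac{1}{\log n}\cdot\Theta(m\log n) = \Theta(m)$ that is increasing must visit $\Omega(m)$ distinct blocks in their true sorted order; reading off which block each chosen element belongs to then recovers the relative order of $\Omega(m)$ of the original keys. Taking $m = \Theta(n/\log n)$ keeps the total length at $n$, and the sorting lower bound gives $\Omega(m\log m) = \Omega(n\log n)$ comparisons (here the algorithm's comparisons on the padded instance simulate comparisons on the sorting instance at no extra cost, since dummy elements within a block and the block-to-block order are consistent with the real-key order).

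The step I expect to be the main obstacle is making the rigidity claim precise: I must show that \emph{every} increasing subsequence of length $\Theta(m)$ — not just the canonical optimal one — actually certifies the sorted order of many keys, so that an adversarial approximation algorithm cannot "cheat" by picking increasing elements that are uninformative about the $a_i$'s. This requires a careful choice of the gadget so that blocks cannot be "partially interleaved" in value (the intervals $I_i$ must be pairwise disjoint, and within-block elements close enough that no long increasing subsequence can mix two blocks out of their interval order), and so that the combinatorial structure of long increasing subsequences is exactly "choose an increasing chain of blocks, then take a prefix-consistent set of elements inside them." Once that structural lemma is in place, the counting argument — an increasing subsequence of length $c\cdot m$ touches $\ge c'\cdot m$ blocks, whose identities in sorted order are a partial sorted order of the keys, and $\Omega(m)$ bits of sorted order cost $\Omega(m\log m)$ comparisons by the decision-tree bound — is routine. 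Since the theorem statement itself says a formal proof is postponed to the full version, the informal version I would present stops at exposing the gadget, the structural lemma, and the parameter choices $m=\Theta(n/\log n)$, block length $\Theta(\log n)$, that make the reduction go through.
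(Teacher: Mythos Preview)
Your proposal has a fatal arithmetic error that no amount of gadget engineering can repair. With $m=\Theta(n/\log n)$ keys, the sorting lower bound you invoke gives only
\[
\Omega(m\log m)=\Omega\!\left(\frac{n}{\log n}\cdot(\log n-\log\log n)\right)=\Omega(n),
\]
not $\Omega(n\log n)$. To extract $\Omega(n\log n)$ from a sorting reduction you would need $m=\Theta(n)$ keys; but then the blocks have constant size, and a $(\log n)$-approximate LIS --- which is only obliged to return $\ell/\log n$ elements for true LIS length $\ell\le n$ --- is far too short to certify the relative order of $\Omega(n)$ keys. The blocking trick cannot make the number of keys and the rigidity of the approximate output large enough simultaneously.

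There is also a structural gap you did not flag as the obstacle: for an \emph{arbitrary} permutation of the $a_i$'s, the LIS of the blocked sequence has length $(\text{block size})\times(\text{LIS of the }a_i)$, not $\Theta(m\log n)$. For a typical permutation this is $\Theta(\sqrt{m}\,\log n)$, so a $(\log n)$-approximation need only output $\Theta(\sqrt{m})$ elements, touching at most $O(\sqrt{m})$ blocks --- nowhere near the $\Omega(m)$ you claim. Your assertion that the optimum LIS has length $\Theta(m\log n)$ only holds when the $a_i$'s are already essentially sorted, in which case there is nothing left to learn.

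The paper's argument avoids gadget reductions altogether and instead adapts Fredman's Dilworth-based counting argument directly. One considers the relaxed decision problem of distinguishing $|LIS(S)|<k$ from $|LIS(S)|\ge k\log n$, which any $(\log n)$-approximation solves. In every ``yes'' leaf of the comparison tree the induced partial order has no antichain of size $k\log n$, hence by Dilworth's theorem decomposes into fewer than $k\log n$ chains, and sorting can then be completed with $n\log(k\log n)+O(n)$ further comparisons. Since at least $S(n,k)\ge n!\bigl(1-\binom{n}{k+1}/k!\bigr)$ permutations land in ``yes'' leaves, the algorithm must already have performed $\log S(n,k)-n\log(k\log n)-O(n)$ comparisons. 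Taking $k=n^{2/3}$ (rather than Fredman's $3\sqrt{n}$) leaves enough slack to absorb the extra $\log n$ in the chain count and still gives $\Omega(n\log n)$.
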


The proof techniques of the lower bound in \cite{Fredman75} are as follows: Assume that we are in the error-free case.
Consider the easier problem of deciding on a given sequence $S$ of $n$ distinct elements whether $|LIS(S)| < k$, and consider the comparison tree of an algorithm $A$ with leaves that tell as an answer to this question either ``yes'' or ``no''. Without loss of generality, assume that no useless comparisons are made on a root to a leaf path (i.e., no comparison twice and no comparisons whose outcome is predictable by the outcomes of previous comparisons).

Every leaf $\ell$ can be associated with a partial order implied by a set of linear orderings on $S$ that are consistent with the transitive closure of the comparisons performed on the path from the root to $\ell$.
If the answer in a leaf is ``yes'', this implies that there are no $k$ elements of $S$ that are pairwise incomparable in this partial order (i.e., the relative order of every pair is neither tested in any comparison on the path, nor implied by other comparisons), as otherwise, these elements could possibly form an increasing sequence of length $k$.
Such a subset of elements is called  \emph{antichain}, while a \emph{chain} is a subset of elements that are linearly ordered. 
An important property of chains and antichains used in the proof is, that in a ``yes''-leaf, the elements can be partitioned into less than $k$ chains, since in any partial order, the elements can be partitioned into $m$ chains, where $m$ is the size of the largest antichain. Furthermore, given such a partition into (less than) $k$ chains, the elements can be sorted with $n\log k +O(n)$ comparisons (think for instance of natural merge sort).

In order to lower bound the number of comparisons needed to end in a ``yes''-leaf, algorithm $A$ can be extended to $A^*$ as follows: whenever $A$ concludes to be in a ``yes''-leaf, $A^*$ continues to completely sort the elements of $S$ (which requires no more than $n\log k +O(n)$ further comparisons).
Let $S(n,k)$ denote the number of linear orderings of the elements in $S$ that end in a ``yes''-leaf. 
Then, $
S(n,k) \ge n!(1-{{\binom{n}{k+1}}}/{(k)!})\,,
$ since there are $n!$ different linear orderings and $\binom{n}{k}$ possible subsequences of size $k$ each increasing with probability $1/k!$.
The comparison tree corresponding to $A^*$ has thus at least $S(n,k)$ leaves, and therefore must perform at least $\log S(n,k)$ comparisons in its worst case. Therefore, $A$ must perform at least $\log S(n,k) - n\log k -O(n)$ comparisons in its worst case to end up in a ``yes''-leaf, which is $\Omega(n\log n)$ when choosing $k=3\cdot \sqrt{n}$.

We can use the above proof techniques to show that every algorithm, that computes a $\log n$-approximation on longest increasing subsequence must perform at least $\Omega(n\log n)$ comparisons.
Let $ B$ be an $\log n$-approximation algorithm for $LIS(S)$ under our error model (i.e., we can always simulate our error model in the error-free case) and consider a relaxation of the problem of determining whether $|LIS(S)|$ is smaller than $k \log n$. 
In this relaxation we require the answer to be ``yes'' (resp. ``no'') if $|LIS(S)|< k$ (resp. $|LIS(S)|\ge k\log n$), while we do not impose any restriction on the range $k\le|LIS(S)|<k \log n$.
It is clear that algorithm $B$ can be used to solve this relaxed problem without increasing the number of needed comparisons.
Therefore,  the associated comparison tree must reach a leaf corresponding to answer ``yes'' for all linear orderings on the elements in $S$ that contain no increasing subsequence of length $k$, while the largest antichain in any such an ordering is smaller than $k \log n$. 
This implies that $B^*$ (still in the error-free case) needs at least $n\log(k\log n) +O(n)$ further comparisons in the worst case to sort the elements in $S$, and $B$ needs at least $\log S(n,k) - n\log(k\log n) -O(n)$ comparisons in the worst case to end in a ``yes''-leaf, which is in $\Omega(n\log n)$ if we set $k=n^{2/3}$. 

Finally, we can conclude that our Approximation-Algorithm performs in asymptotically optimal time, since we can always simulate our error model in the error-free case.

\section{Conclusion}\label{sec:conclusion}
Although a logarithmic approximation ratio might not seem very exciting at first glance, it turns out that this is the best one that can be obtained in the presence of persistent comparison errors. In this respect, it is interesting to see that there exist such a simple recipe to compute a logarithmic approximation. A recipe that can use as a black box any algorithm that computes a longest increasing sequence if no comparison errors happen:
\begin{itemize}
	\item First, obtain an approximately sorted sequence $\APXSORT$ of the elements such that the maximum dislocation is $d$ and redefine the comparisons according to this order. 
	Then, partition the elements into $2d$ subsets, such that every $2d$-th element in $\APXSORT$ gets into the same partition, and obtain $2d$ input subsequences based on this partition.
	Finally, run the algorithm on every input subsequence and return the longest result.
\end{itemize}
As indicated earlier, our Approximation-Algorithm has the advantage, that it performs much better than $O(\log n)$-approximate on many input sequences and is even optimal in the case where the longest increasing subsequence is already $2d$-distant in $\APXSORT$, whereas this is not necessarily true when using the simple recipe. Moreover, it is easy to observe that the Approximation-Algorithm is never worse than the recipe.

\smallskip
Finally, we would like to explain how the upper bound on the approximation factor can be generalized. Our Approximation-Algorithm actually succeeds whenever the approximately sorted sequence has maximum dislocation at most $d$. This implies that the result can be parametrized and also used in other models with comparison comparison errors.
\begin{itemize}
	\item Whenever one can obtain a total order with maximum dislocation $d$, the Approximation-Algorithm is $2d$-approximative.
\end{itemize}
Consider for instance the so-called \emph{threshold}-model \cite{ajtai2016sorting,FunkeMN05,GeissmannP18}, where comparisons between numbers that differ by more than some threshold $\tau$ are always correct, while those between numbers that differ by less than $\tau$ can fail persistently (with some probability possibly depending on the difference or even adversarially). If the input sequence $S$ is a permutation of the numbers $\{1\dots,n\}$, running Quicksort in this error model yields a sequence with maximum dislocation $2\tau$ (see \cite{GeissmannP18}). Thus, our Approximation-Algorithm finds a $4\tau$-approximation of the longest increasing subsequence in $S$.

\bibliography{references}

\end{document}